\newcommand{\be}{\begin{equation}}
\newcommand{\ee}{\end{equation}}
\newcommand{\ba}{\begin{array}}
\newcommand{\ea}{\end{array}}
\newcommand{\bea}{\begin{eqnarray}}
\newcommand{\eea}{\end{eqnarray}}
\newcommand{\cal}{\mathcal}
\newcommand{\calL}{{\cal L }}
\newcommand{\calZ}{{\cal Z }}
\newcommand{\calW}{{\cal W }}
\newcommand{\calS}{{\cal S }}
\newcommand{\calI}{{\cal I }}
\newcommand{\CC}{\mathbb{C}}
\newcommand{\RR}{\mathbb{R}}
\newcommand{\ra}{\rangle}
\newcommand{\prob}[1]{\mathrm{Pr}{\left[\, {#1} \, \right]}}
\newcommand{\pois}[1]{\mathrm{Pois}{\left[ {#1}  \right]}}
\newcommand{\expect}[1]{{\mathbb{E}}{\left[ {#1}  \right]}}
\newcommand{\trace}[1]{{\mathrm{Tr}{#1}}}
\newtheorem{dfn}{Definition}
\newtheorem{lemma}{Lemma}
\newtheorem{theorem}{Theorem}
\begin{document}

\title{Monte Carlo simulation of stoquastic Hamiltonians}

\author{Sergey Bravyi}
\address{IBM T.J. Watson Research Center\\
Yorktown Heights, NY 10598}
\email{sbravyi@us.ibm.com}

\date{}

\begin{abstract}
Stoquastic  Hamiltonians  are
characterized by the property that their off-diagonal matrix elements
in the standard product basis are real and non-positive.  Many interesting
quantum models fall into this class including the  Transverse field Ising Model (TIM),
 the Heisenberg model on bipartite graphs, and the bosonic Hubbard
model. Here we  consider the problem  of estimating the ground state energy of a
local stoquastic  Hamiltonian $H$ with a promise that the ground state of $H$ has a 
non-negligible correlation with some `guiding' state 
that admits a concise classical description. A formalized
version of this problem called Guided Stoquastic Hamiltonian  is shown to be
complete for the complexity class $\MA$ (a probabilistic analogue of $\NP$).
To prove this result  we employ the Projection Monte Carlo algorithm with
a variable number of walkers.
Secondly, we  show that the ground state and thermal equilibrium properties
of the ferromagnetic TIM can be simulated in polynomial time on a classical
probabilistic computer. 
 This result is based on the approximation algorithm
for the classical ferromagnetic Ising model due to Jerrrum and Sinclair (1993).
\end{abstract}

\maketitle

\section{Introduction}

Calculating the ground state energy and thermal equilibrium properties
of interacting quantum many-body systems is one of the central problems
in quantum chemistry and condensed matter physics. It was realized early on 
that the computational complexity of this problem is strongly affected by statistics of the
constituent particles. For systems composed of bosons and for certain special classes of spin Hamiltonians
the quantum partition function can be mapped to the one of a classical 
system occupying one extra spatial dimension~\cite{Sachdev2007}
which often enables efficient Monte Carlo simulation~\cite{suzuki1977,prokof1996exact,sandvik1991,Trivedi1989,Buonaura1998,Wessel2004}.
On the other hand, for systems composed of fermions
 and for the vast majority of spin Hamiltonians
the quantum-to-classical mapping produces a partition function with
unphysical Boltzmann weights taking both positive
and negative (or complex) values --- a phenomenon known as the ``sign problem".
In certain special cases, such as the Heisenberg antiferromagnetic model on bipartite graphs, 
the sign problem can be avoided by a suitable basis change~\cite{Trivedi1989}. In general however
the sign problem appears to be an intrinsic feature of the quantum mechanics
that limits applicability of Monte Carlo simulation algorithms~\cite{Troyer2005}.

The present paper attempts to provide a rigorous basis for the common belief that
Hamiltonians  avoiding the sign problem are ``easy" to simulate. 
We focus on systems composed of qubits (spins-$1/2$) and 
study the class of so-called {\em stoquastic}\footnote{The term stoquastic  introduced in Ref.~\cite{BDOT06}
conveys  the fact that the studied problems lie on the border between quantum mechanics and classical theory
of stochastic matrices.  For non-technical purposes `stoquastic' is equivalent to
``avoiding the sign problem". } Hamiltonians~\cite{BDOT06}.
The defining property of stoquastic Hamiltonians is that their off-diagonal matrix elements in the computational 
basis must be real and non-positive. 
More formally, let $n$ be the number of qubits and $k=O(1)$ be a small constant. 
Define a $k$-local stoquastic Hamiltonian as 
\begin{equation}
\label{H1}
H=\sum_{\alpha=1}^M  H_\alpha,
\end{equation}
where each term $H_\alpha$ is a hermitian operator on $n$ qubits that acts non-trivially on 
a subset of at most $k$ qubits and satisfies 
\begin{equation}
\label{Hxy}
\langle x|H_\alpha|y\rangle \le 0 \quad \mbox{for all $x,y\in \{0,1\}^n$ with $x\ne y$}.
\end{equation}
From Eq.~(\ref{Hxy}) one can easily infer that the Boltzmann exponential
operator $e^{-\beta H_\alpha}$ has non-negative matrix elements for any
inverse temperature $\beta\ge 0$. Furthermore, the matrix element
$\langle x|e^{-\beta H_\alpha}|y\rangle$ depends non-trivially only on $O(1)$ bits of $x$ and $y$.
Combining these two properties one can
approximate the quantum partition function $\trace{\, e^{-\beta H}}$
by a classical partition function that involves local non-negative Boltzmann weights,
for example, using the Suzuki-Trotter formula.  
In that sense any stoquastic Hamiltonian avoids the sign problem.

Below we assume that the Hamiltonian is normalized\footnote{Here 
$poly(n)$ and $poly(1/n)$ denote functions of $n$ with the asymptotic scaling $n^{O(1)}$ 
and $n^{-O(1)}$.}
 such that 
$\|H_\alpha\|\le poly(n)$ for all $\alpha$ and $M\le poly(n)$. 
The quantity we are interested in is the ground state energy 
\begin{equation}
\label{lambda}
\lambda=\min_\psi \langle \psi|H|\psi\rangle, \quad {\langle \psi|\psi\rangle=1}.
\end{equation}
We shall consider a problem Stoquastic Local Hamiltonian (Stoq-LH) where
the goal is to estimate the ground state energy with a small additive error.
More formally, an instance of Stoq-LH includes the number of qubits $n$,
a list of interactions $H_\alpha$ as above, and real numbers $\lambda_{yes}<\lambda_{no}$
such that $\lambda_{no}-\lambda_{yes}\ge poly(1/n)$.
The problem is to decide whether $\lambda\le \lambda_{yes}$ (yes-instance) or $\lambda\ge \lambda_{no}$
(no-instance)  given a
promise that $\lambda$ does not belong to the interval $(\lambda_{yes},\lambda_{no})$.

Since any Hamiltonian diagonal in the standard basis is stoquastic, Stoq-LH encompasses
hard classical optimization problems such as $k$-SAT or MAX-CUT. This shows that Stoq-LH
is at least $\NP$-hard and essentially rules out a possibility that Stoq-LH admits an efficient
algorithm. A natural next question is whether  Stoq-LH is contained in the class $\NP$
or its probabilistic analogue called $\MA$ (Merlin-Arthur games)~\cite{Babai1985}. Loosely speaking, proving
the containment in $\NP$ or $\MA$ would imply that certain ground state properties of
stoquastic Hamiltonians can be efficiently {\em verified} 
even though they cannot be efficiently computed. Some progress along these lines
has been made in Refs.~\cite{BDOT06,BT07} by  proving that Stoq-LH
is contained in $\MA$ for the special case of frustration-free Hamiltonians
(for a detailed discussion of the previous work see Section~\ref{subs:previous}).
The present paper extends these results by identifying two new classes
of stoquastic Hamiltonians whose ground state properties can be efficiently
verified and, in certain cases, efficiently computed.

\section{Summary of results}

The first class of Hamiltonians that we study is motivated by the 
Quantum Phase Estimation (QPE) algorithm~\cite{KSV}
and the question of whether application of QPE  to stoquastic Hamiltonians
can be efficiently simulated classically. 
Recall that  the ground state energy of a local Hamiltonian $H$ can be efficiently estimated
on a quantum computer  via QPE only if one is able to prepare some initial state $\phi$ 
that has a non-negligible (at least $poly(1/n)$) overlap  with the exact ground state of $H$. 
Following Ref.~\cite{CerfOlivier95}
we shall refer to such initial state $\phi$ as a {\em guiding state} since its purpose is 
to guide the algorithm towards the ground state of $H$.  
It is usually assumed that a good choice of  the guiding state can be made if the system
under consideration is sufficiently well understood and some physical theory describing,
at least approximately, its ground state properties is available. This
motivates the study of  the Stoquastic Local Hamiltonian problem  with an extra promise that 
the Hamiltonian admits a guiding state. A natural question is whether in this case QPE can be replaced by some classical algorithm.
We show that the answer is YES provided that the guiding state has 
efficiently computable amplitudes and
a non-negligible pointwise correlation with the exact ground state as formally defined below. 

\begin{dfn}
\label{dfn:guided}
Let $H$ be a  stoquastic Hamiltonian. We will say that $H$ admits a guiding state iff there exists a pair of 
normalized $n$-qubit states $\psi,\phi$ with
 non-negative amplitudes in the standard
basis such that  $\psi$ is a ground state of $H$,
the function $x\to \langle x|\phi\rangle$ is computable by a classical circuit
of size $poly(n)$, and 
\begin{equation}
\label{guiding}
\langle x|\phi\rangle \ge \frac{\langle x|\psi\rangle}{poly(n)} \quad \mbox{for all $x\in \{0,1\}^n$}
\end{equation}
\end{dfn}
A state $\phi$ satisfying the above conditions will be referred to as a guiding state. 
Note that any stoquastic Hamiltonian has a ground state with non-negative amplitudes due to the 
Perron-Frobenius theorem. However, it is rather unlikely that any stoquastic Hamiltonian admits
a guiding state. 
It should be emphasized that our classical algorithm  and QPE 
need guiding states with different properties. The pointwise correlation condition in Eq.~(\ref{guiding})
is much stronger that the non-negligible overlap condition needed for QPE. Note that
Eq.~(\ref{guiding}) implies $\langle \psi|\phi\rangle\ge poly(1/n)$, but the converse is not true. 
On the other hand, given a  short classical circuit that computes amplitudes of $\phi$, 
generally one cannot convert it to  a short quantum circuit that prepares $\phi$.

Define a problem Guided Stoquastic Local Hamiltonian (Guided Stoq-LH) as a special case of Stoq-LH where any yes-instance
must satisfy two promises:
 {\em Promise~1:} $\lambda\le \lambda_{yes}$ and
{\em Promise~2:}  $H$ admits a guiding state.
In the case of no-instances the only promise is that $\lambda\ge \lambda_{no}$. 
Note that the guiding state is not regarded as a part of the input.
Our main result is the following.
 \begin{theorem}
 \label{thm:guided}
Guided Stoq-LH  is contained in the class promise-$\MA$ for any constant $k$.
Guided Stoq-LH is complete for promise-$\MA$ for $k\ge 6$. 
\end{theorem}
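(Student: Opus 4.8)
\emph{Containment in promise-$\MA$.}
First replace the energy problem by an eigenvalue problem for a nonnegative matrix: shifting by a constant $c=poly(n)$ large enough that $G:=cI-H\succeq 0$, the matrix $G$ has nonnegative entries, its top eigenvalue equals $\mu_0=c-\lambda$, and by Perron--Frobenius its top eigenvector $\psi$ has nonnegative amplitudes. Merlin's witness is a description of the circuit computing the amplitudes of the guiding state $\phi$; for a suitable $T=poly(n)$ Arthur must estimate $Z_T:=\langle\phi|G^T|\phi\rangle/\langle\phi|\phi\rangle$. Two elementary facts drive the protocol. For a yes-instance $Z_T\ge(c-\lambda_{yes})^T\langle\phi|\psi\rangle^2$, and Eq.~(\ref{guiding}) with $\psi\ge 0$ gives $\langle\phi|\psi\rangle\ge\langle\psi|\psi\rangle/poly(n)=1/poly(n)$, hence $Z_T\ge(c-\lambda_{yes})^T/poly(n)$. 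For a no-instance $c-\lambda\le c-\lambda_{no}$, so $Z_T\le(c-\lambda)^T\le(c-\lambda_{no})^T$ for \emph{every} normalized state, regardless of what function Merlin's circuit computes. Since $\lambda_{no}-\lambda_{yes}\ge poly(1/n)$ while $c-\lambda_{no}\le poly(n)$, choosing $T$ a large enough polynomial separates the two bounds by any prescribed polynomial factor, and Arthur accepts iff his estimate of $Z_T$ beats their geometric mean.

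\emph{The Monte Carlo estimator.}
The remaining task --- a randomized polynomial-time estimator of $Z_T$ --- is the Projection (Diffusion) Monte Carlo algorithm with importance sampling by $\phi$ and a variable number of walkers. Conjugating $G$ by the diagonal matrix $D_\phi$ of amplitudes of $\phi$ gives $\tilde G:=D_\phi^{-1}GD_\phi\ge 0$ whose row sums $g_L(z)=(G\phi)(z)/\phi(z)\ge 0$ are computable in polynomial time (because $H$ is $k$-local and $\phi$ has a short circuit); factor $\tilde G_{zz'}=g_L(z)\,\pi(z'|z)$ with $\pi(\cdot|z)$ a probability distribution. A population of walkers initialized to represent $\phi$, evolved for $T$ steps by the Markov kernel $\pi$ with branching --- a walker at $z$ being replaced by about $g_L(z)$ copies, using stochastic rounding --- together with periodic population control to keep the walker count polynomially bounded, has accumulated total weight whose expectation is exactly $Z_T$. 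Since this estimator is nonnegative with mean $Z_T$, soundness is essentially automatic: Markov's inequality and a median over $O(n)$ repetitions show that, whatever Merlin sends, the estimate exceeds the acceptance threshold only with exponentially small probability. Completeness needs a matching second-moment bound, and I expect this to be the main obstacle. Two things must be controlled: (i)~building the initial walker population so that it faithfully represents the --- possibly very spread-out --- amplitudes of $\phi$ using only pointwise access to them, which I expect to force Merlin to supplement the witness with structural information about $\psi$; and (ii)~bounding the per-step fluctuation of the branching factors $g_L$ and keeping the population from dying out or exploding. For (ii) the variable-number-of-walkers / resampling mechanism is essential --- it turns a variance that would otherwise accumulate multiplicatively in $T$ into one that accumulates only additively --- and it is here that the pointwise correlation condition Eq.~(\ref{guiding}), and not merely the overlap it implies, does the real work.

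\emph{Hardness for $k\ge 6$.}
For the lower bound I would adapt the Feynman--Kitaev circuit-to-Hamiltonian construction to a classical probabilistic verifier, using that every $\MA$ language has a verifier with perfect completeness. Compile a fixed such verifier into a sequence of reversible deterministic gates together with coin-flip steps acting on fresh $|0\rangle$ ancillas. A reversible deterministic gate is a permutation matrix, hence nonnegative, so its propagation term $\tfrac{1}{2}\big(|t\rangle\langle t|+|t-1\rangle\langle t-1|\big)\otimes I-|t\rangle\langle t-1|\otimes U_t-|t-1\rangle\langle t|\otimes U_t^\dagger$ has nonpositive off-diagonal entries; a coin-flip step is implemented not by a sign-carrying Hadamard but by a stoquastic branching transition built from the nonnegative rank-one projector $|+\rangle\langle+|$, so that the computation history becomes a tree rather than a path. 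Adding the usual diagonal clock, input and output penalty terms --- all trivially stoquastic --- yields a $6$-local stoquastic Hamiltonian, the locality being dictated by the clock register together with the at most two qubits a gate touches. The Kitaev spectral-gap analysis, which here is simpler because the propagation is a genuine substochastic walk on the history tree and one may work with real nonnegative vectors throughout, shows that the ground energy is $0$ when some witness is accepted with certainty and at least $poly(1/n)$ otherwise, supplying the required $\lambda_{yes}<\lambda_{no}$ with a $poly(1/n)$ gap. Finally, in the yes-case the history state of a perfectly accepting witness is an \emph{exact} ground state, has nonnegative amplitudes, and has efficiently computable amplitudes (simulate the reversible/coin computation forward for the prescribed number of steps and read off the $2^{-(\#\text{coins used})/2}/\sqrt{T+1}$ weight), so it is its own guiding state and Promise~2 holds automatically; arranging that this clean history state is an exact ground state --- the reason for invoking perfect completeness --- is the delicate point on this side and makes Eq.~(\ref{guiding}) hold with the trivial polynomial $1$.
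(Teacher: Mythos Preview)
Your overall framework---Projection Monte Carlo with importance sampling by $\phi$, Markov's inequality for soundness, a second-moment bound for completeness---is the paper's, and your hardness argument via the stoquastic clock construction with perfect completeness is essentially what the paper invokes from Ref.~\cite{BDOT06}. But the two obstacles you flag for containment are real and you do not resolve them; the paper sidesteps both by enlarging Merlin's witness. First, Merlin also sends a number $\lambda_M\le\lambda_{yes}$ (honestly, the exact ground-state energy); Arthur then forms $G=I-\beta(H-\lambda_M I)$ with $\beta=1/(2J)$, so that $\|G\|=1$ exactly in the yes-case and $\|G\|\le 1-\Delta$ in the no-case for every allowed $\lambda_M$. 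The branching walk is therefore \emph{critical} on yes-instances and subcritical on no-instances, and no population control or resampling is used: Arthur simply aborts and rejects whenever the population exceeds a polynomial cutoff $\Gamma_{max}$, which keeps the estimator unbiased and the analysis clean. Second, Merlin also sends a single basis string $x_M$; the walk starts from one walker at $x_M$, eliminating any need to sample from $\phi$. Arthur accepts iff the population survives $L=poly(n)$ steps without ever exceeding $\Gamma_{max}$.

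Soundness is then $P_{acc}\le\mathbb{E}[\Gamma_L]=\phi(x_M)^{-1}\langle x_M|G^L|\phi\rangle\le 2^{O(n)}(1-\Delta)^L$, which is your Markov argument. Completeness is where the work lies, and it is not the resampling heuristic you sketch: the paper computes $\mathbb{E}[\Gamma_L^2]$ explicitly as a sum over branching times (Lemma~\ref{lemma:moments}), then shows---by averaging the relevant functionals of $x_M$ against the distribution $\pi(x)\propto\psi(x)\phi(x)$---that some $x_M$ exists for which $\mathbb{E}[\Gamma_L^2]/\mathbb{E}[\Gamma_L]^2\le poly(n)$ and simultaneously $\mathbb{E}[\Gamma_t]\le poly(n)$ for every $t\le L$. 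The pointwise condition Eq.~(\ref{guiding}) enters exactly here, to bound $\psi(y)/\phi(y)\le r(n)$ uniformly inside the second-moment sum and to lower-bound $\langle\psi|\phi\rangle$; a mere overlap bound would not suffice. Your plan to initialise a population representing $\phi$ and rely on resampling to convert multiplicative into additive variance would require a substantially different analysis, and cannot even get started without a sampler for $\phi$, which the problem does not supply.
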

Less formally, Theorem~\ref{thm:guided} asserts that a verifier (usually called Arthur)
with polynomial classical computational resources and a random number generator
can reliably distinguish between
yes- and no-instances of the problem by consulting an untrusted 
prover (usually called Merlin) which has  unlimited computational power.
Merlin's goal is to convince Arthur that a given instance of the problem is positive, 
that is, $\lambda\le \lambda_{yes}$.  To this end Merlin sends Arthur a witness
--- a classical bit string which, if Merlin is honest,  includes the description 
of a guiding state and certain additional information.
In the case of yes-instances Merlin can always find a witness
convincing Arthur that $\lambda\le \lambda_{yes}$ with probability close to one. 
Meanwhile,  for  no-instances Arthur decides that $\lambda\ge \lambda_{no}$ 
with probability close to one for any Merlin's witness.

To prove the containment in $\MA$ we employ a version of the Projection Monte Carlo 
algorithm with a variable number of walkers. This algorithm has been previously proposed in the context of  quantum 
Monte Carlo simulations by Cerf and Martin~\cite{CerfOlivier95} and by
Oliveira~\cite{Roberto}. The key idea of the algorithm
is to convert  a stoquastic Hamiltonian into a random walk using Poisson-distributed random variables.
A state of the walk is a function that assigns a non-negative integer to each $n$-bit binary string. 
Such  function can be visualized as a population of walkers distributed over
the Boolean cube. A typical step of the walk involves moving one or several walkers to a new
location, creating  new walkers, and eliminating some existing walkers. 
We show that for any yes-instance Merlin can choose a witness
such the the total population size is confined to the interval $[1,poly(n)]$ during all steps of the walk
with a non-negligible probability. Meanwhile, for any no-instance and for any Merlin's
witness the average population size  decreases exponentially with time. In this case
 the population either becomes extinct after $poly(n)$ steps or  becomes
 too large at some intermediate step 
 due to statistical fluctuations. By  implementing the random walk and monitoring the population size 
 Arthur can therefore distinguish between yes- and no-instances.
Our rigorous analysis of the algorithm  based on the second moment method appears to be new. 
The last statement of Theorem~\ref{thm:guided} 
($\MA$-completeness) follows trivially  from the results of Ref.~\cite{BDOT06}.
We discuss some open problem and potential improvements of Theorem~\ref{thm:guided}  in Section~\ref{subs:problems1}.

The second  class of  Hamiltonians that we study is 
 the Transverse field Ising Model (TIM):
\begin{equation}
\label{TIM1}
H=-\sum_{1\le u<v\le n} J_{u,v} Z_u Z_v -\sum_{1\le u\le n} h_u X_u.
\end{equation}
Here $X_u$ and $Z_u$ are the Pauli operators acting on a qubit $u$, while
$J_{u,v}$ and $h_u$ are real coefficients. 
A direct inspection shows that $H$ is stoquastic iff $h_u\ge 0$ for all $u$. 
Any TIM Hamiltonian can be made stoquastic by a transformation
$H\to Z_u H Z_u$ that flips the sign of $h_u$ without changing any other terms. 
Define a partition function 
\begin{equation}
\label{partition}
{\calZ}=\trace{\, e^{-H}}.
\end{equation}
Our second result shows that $\calZ$ can be efficiently approximated with a small multiplicative error
$\delta$  in the special case of the {\em ferromagnetic} TIM, that is, when $J_{u,v}\ge 0$ for all $u,v$. 
More precisely, let
\begin{equation}
\label{norm}
J=\max{\{ J_{u,v}, |h_u|\}}
\end{equation}
be the maximum norm of the interactions and $0<\delta<1$ be the desired precision. 
We shall say that $\calZ$ admits a {\em fully polynomial randomized approximation scheme} (FPRAS)
if there exists a classical randomized algorithm with the running time $poly(n,J,\delta^{-1})$ that takes as input
a pair $(H,\delta)$ and outputs a random variable $\tilde{\calZ}$ such that 
\begin{equation}
\label{FPRAS}
\prob{ (1-\delta)\calZ\le \tilde{\calZ}\le (1+\delta)\calZ } \ge 2/3.
\end{equation}
\begin{theorem}
\label{thm:TIM}
The partition function of the ferromagnetic TIM admits FPRAS. 
\end{theorem}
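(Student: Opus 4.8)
The plan is to turn the quantum partition function $\calZ=\trace{e^{-H}}$ into the partition function of a purely \emph{ferromagnetic} classical Ising model by a Suzuki--Trotter expansion in the imaginary-time direction, and then invoke the FPRAS of Jerrum and Sinclair for the latter. First I would reduce to the case $h_u>0$ for all $u$: conjugating $H$ by $\prod_{u:\,h_u<0}Z_u$ flips the signs of the negative fields without changing $\calZ$ or the $ZZ$ couplings, and replacing any $h_u=0$ by an arbitrarily small positive $\eta$ perturbs $H$ by $\eta$ in operator norm, hence perturbs $\calZ$ by an arbitrarily small multiplicative factor (since $e^{-\eta}\calZ\le\trace{e^{-H+\eta I}}\le e^{\eta}\calZ$). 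Write $H=-D-V$ with $D=\sum_{u<v}J_{u,v}Z_uZ_v$ diagonal and $V=\sum_u h_uX_u$.

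Next, fix an integer $L$ and consider the symmetric (Strang) splitting $M_L=\big(e^{D/(2L)}e^{V/L}e^{D/(2L)}\big)^{L}$. Since $e^{D/(2L)}e^{V/L}e^{D/(2L)}$ is Hermitian and positive definite, so is $M_L$, and I may write $M_L=e^{-H_L}$ with $H_L$ Hermitian. Once $L$ exceeds a fixed constant times $\|D\|+\|V\|=O(n^2J)$, a Baker--Campbell--Hausdorff estimate gives $\|H_L-H\|\le\epsilon_L$ with $\epsilon_L=O\big((\|D\|+\|V\|)^3/L^2\big)=poly(n,J)/L^2$; the point of the symmetric splitting is that it kills the $O(1/L)$ error term. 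From $H-\epsilon_LI\preceq H_L\preceq H+\epsilon_LI$ and the monotonicity of $A\mapsto\trace{e^{A}}$ with respect to the Loewner order, I obtain the \emph{multiplicative} bound $e^{-\epsilon_L}\calZ\le\trace{M_L}\le e^{\epsilon_L}\calZ$, so choosing $L=poly(n,J,\delta^{-1})$ makes $\trace{M_L}$ a $(1\pm\delta/3)$-approximation of $\calZ$.

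Then I would expand $\trace{M_L}$ in the computational basis. Collapsing adjacent $e^{D/(2L)}$ factors around the trace, $\trace{M_L}=\trace{\big(e^{D/L}e^{V/L}\big)^{L}}=\sum_{z^{(1)},\dots,z^{(L)}}\prod_{t=1}^{L}\langle z^{(t)}|e^{D/L}|z^{(t)}\rangle\,\langle z^{(t)}|e^{V/L}|z^{(t+1)}\rangle$ with $z^{(L+1)}:=z^{(1)}$. Setting $s_u^{(t)}=(-1)^{z_u^{(t)}}\in\{\pm1\}$, the diagonal factor equals $\exp\big(\tfrac1L\sum_{u<v}J_{u,v}s_u^{(t)}s_v^{(t)}\big)$, and since $\langle s|e^{\theta X}|s'\rangle$ equals $\cosh\theta$ or $\sinh\theta$ according as $s=s'$ or $s\ne s'$, the off-diagonal factor equals $\prod_u C_u\,e^{K_u s_u^{(t)}s_u^{(t+1)}}$ with $C_u=\sqrt{\tfrac12\sinh(2h_u/L)}>0$ and $K_u=\tfrac12\ln\coth(h_u/L)>0$. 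Hence
\[
\trace{M_L}=\Big(\prod_u C_u\Big)^{L}\sum_{\{s_u^{(t)}\}}\exp\Big(\sum_{t=1}^{L}\Big[\tfrac1L\sum_{u<v}J_{u,v}s_u^{(t)}s_v^{(t)}+\sum_u K_u s_u^{(t)}s_u^{(t+1)}\Big]\Big)=:\Big(\prod_u C_u\Big)^{L}Z_{\mathrm{cl}},
\]
where $Z_{\mathrm{cl}}$ is the partition function, \emph{with no external field}, of a classical Ising model on the vertex set $[n]\times\ZZ_L$ whose couplings --- $J_{u,v}/L\ge0$ within each time slice and $K_u>0$ along each time column --- are all ferromagnetic, on a graph with $nL=poly(n,J,\delta^{-1})$ vertices and couplings of magnitude $poly(n,J,\delta^{-1})$ (the $K_u$ grow only logarithmically as $h_u/L\to0$). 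I then feed $Z_{\mathrm{cl}}$ to the Jerrum--Sinclair FPRAS for the ferromagnetic Ising model with precision $\delta/4$, multiply by the exactly computable prefactor $(\prod_u C_u)^{L}$, and output the result; after rescaling $\delta$ by a constant this is a $(1\pm\delta)$-approximation of $\calZ$ with success probability $\ge2/3$, computed in time $poly(n,J,\delta^{-1})$.

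The hard part is the Trotter-error analysis of the second step. The naive operator-norm Trotter bound $\|e^{-H}-M_L\|=poly(n,J)/L$ only controls $|\calZ-\trace{M_L}|$ up to a factor $2^{n}$, and since $\calZ$ can itself be exponentially small this would force $L$ to be exponentially large. What rescues the argument is (i) using the symmetric splitting, so that $M_L$ is \emph{exactly} of the form $e^{-H_L}$ with $\|H_L-H\|=O(poly(n,J)/L^2)$, and (ii) replacing the false inequality ``$A\preceq B\Rightarrow e^{A}\succeq e^{B}$'' by the true trace inequality ``$A\preceq B\Rightarrow\trace{e^{A}}\le\trace{e^{B}}$'', which converts the additive operator bound into a multiplicative bound on the partition functions. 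Everything else --- writing down the explicit Ising couplings and checking that all parameters stay polynomial --- is routine.
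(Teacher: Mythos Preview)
Your proposal is correct and follows essentially the same route as the paper: symmetric (second-order) Trotter splitting, a BCH-type identity expressing the single-step product exactly as $e^{(D+V)/L + E/L^3}$ so that $M_L=e^{-H_L}$ with $\|H_L-H\|=O(\rho^3/L^2)$, conversion of this operator-norm bound into a \emph{multiplicative} error on the trace (your Loewner/trace-monotonicity step is equivalent to the paper's use of Weyl's eigenvalue inequality), the standard quantum-to-classical mapping to a ferromagnetic Ising model on $[n]\times\ZZ_L$, and finally Jerrum--Sinclair. The paper packages the BCH step as a standalone lemma and handles the $h_u=0$ issue in a closing remark rather than upfront, but these are cosmetic differences.
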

This immediately implies that the free energy $F(T)=-T\log{\left(\trace{\, e^{-H/T}}\right)}$ can be approximated
with an additive error $\delta$ in time $poly(n,J,\delta^{-1},T^{-1})$.
Furthermore, since $F(0)-F(T)=\int_0^T dT' S(T')\le nT$, where $S(T')$ is the entropy
of the Gibbs state, we conclude that the ground state energy $\lambda=F(0)$ can be 
approximated with an additive error $\delta$ in time $poly(n,J,\delta^{-1})$.
Theorem~\ref{thm:TIM} is a simple application of the seminal result by Jerrum and Sinclair~\cite{JS93}
who showed that thermal equilibrium properties of the ferromagnetic  {\em classical}  Ising model ($h_u=0$ for all $u$)
can be simulated efficiently.
\begin{theorem}[\bf Jerrum and Sinclair~\cite{JS93}]
\label{thm:IM}
The partition function of the ferromagnetic classical Ising model admits FPRAS.
\end{theorem}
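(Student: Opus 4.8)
The plan is to follow the two pillars of the Jerrum--Sinclair program: the standard reduction of approximate counting to approximate sampling for self-reducible problems (Jerrum, Valiant, and Vazirani), and the construction of a Markov chain with the correct stationary distribution that provably mixes in polynomial time. Writing the partition function of the ferromagnetic classical Ising model as $\calZ=\sum_{\sigma\in\{-1,+1\}^n}\exp\!\big(\sum_{u<v}J_{u,v}\sigma_u\sigma_v\big)$ with all $J_{u,v}\ge 0$, the first step is to pass to the ``high-temperature'' (even-subgraph) representation. Using $e^{J\sigma_u\sigma_v}=\cosh(J)\,(1+\sigma_u\sigma_v\tanh J)$, expanding the product over edges, and summing over $\sigma$ (which annihilates every term except those in which each vertex has even degree) yields
\begin{equation}
\calZ=2^n\Big(\prod_{u<v}\cosh J_{u,v}\Big)\sum_{S\,:\,\partial S=\emptyset}\ \prod_{e\in S}\lambda_e,\qquad \lambda_e=\tanh J_e,
\end{equation}
where $S$ ranges over edge subsets with all degrees even and $\partial S$ is the set of odd-degree vertices. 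The ferromagnetic hypothesis enters decisively here: $J_e\ge 0$ forces $\lambda_e\ge 0$, so the even-subgraph sum is a genuine nonnegative-weight counting problem amenable to sampling (for antiferromagnetic couplings the $\lambda_e$ change sign and this is no longer a measure). Since the prefactor is computable exactly in $poly(n)$ time, an FPRAS for the even-subgraph sum immediately gives one for $\calZ$.

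It therefore suffices to approximate the even-subgraph generating sum, and I would reduce this to a weighted matching problem. Replacing each vertex $v$ of degree $d$ by a gadget with $d$ external ports, internally wired so that the number of ports joined by a ``selected'' incident edge is forced to be even, and retaining the weights $\lambda_e$, produces an auxiliary graph $G'$ of size $poly(n)$ in which even subgraphs of the original graph correspond, with matching weights, to the relevant matchings of $G'$. Thus the even-subgraph sum equals, up to an explicit normalization, the monomer--dimer partition function $\sum_{M}\prod_{e\in M}w_e$ of $G'$, and the whole task is reduced to sampling a weighted matching $M$ of $G'$ from the Gibbs distribution $\pi(M)\propto\prod_{e\in M}w_e$.

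The technical heart is a Markov chain on the matchings of $G'$ with stationary distribution $\pi$ that mixes in $poly(n)$ time. I would use the single-edge Jerrum--Sinclair chain: from a matching $M$ pick an edge $e$ of $G'$ uniformly and, according to whether $e$ and its endpoints are currently matched, either delete $e$, add $e$, or exchange $e$ for the unique conflicting edge, accepting with a Metropolis probability that makes $\pi$ reversible. Reversibility and irreducibility are routine; the crux --- and the step I expect to be the main obstacle --- is a lower bound of $1/poly(n)$ on the spectral gap. For this I would use the canonical paths (multicommodity flow) method: for each ordered pair $(I,F)$ of matchings, route a unit of flow along a state-space path obtained by decomposing the symmetric difference $I\oplus F$ into alternating paths and even cycles and unwinding the components one at a time in a fixed order, alternately removing edges of $I$ and inserting edges of $F$. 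The decisive estimate is a congestion bound: one shows that any fixed transition $t$ lies on few canonical paths by exhibiting an injective encoding that recovers $(I,F)$ from the endpoints of $t$ plus $O(1)$ bits of bookkeeping, and one controls the accumulated weight ratios so that a transition of small $\pi$-weight is not overloaded by paths between high-weight endpoints. Bounding the congestion by $poly(n)$ gives the polynomial spectral gap, hence an $\eps$-approximate sampler for $\pi$ after $poly(n,\log \eps^{-1})$ steps.

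Finally I would convert the sampler into an FPRAS by the standard self-reducibility telescoping. Ordering the edges of $G'$ and introducing their activities one at a time, one writes the partition function as a product of $poly(n)$ ratios of partition functions of successive instances; each ratio lies between inverse-polynomial and polynomial values, so it can be estimated to relative accuracy $\delta/poly(n)$ from a polynomial number of samples drawn by the chain. Multiplying the estimates and restoring the explicit prefactor $2^n\prod_{u<v}\cosh J_{u,v}$ and the gadget normalization produces a random variable $\tilde{\calZ}$ satisfying the FPRAS guarantee of Eq.~(\ref{FPRAS}), with total running time $poly(n,J,\delta^{-1})$, where the dependence on $J$ enters through the range of the weights $w_e$ that controls the mixing time.
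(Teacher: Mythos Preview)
The paper does not prove Theorem~\ref{thm:IM} at all: it is quoted verbatim as a result of Jerrum and Sinclair~\cite{JS93} and invoked as a black box in the proof of Theorem~\ref{thm:TIM}. There is therefore no ``paper's own proof'' to compare your proposal against.

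That said, your sketch does not follow the actual argument of~\cite{JS93}, and the point of departure is a genuine gap. Your high-temperature expansion to the even-subgraph sum, and the observation that ferromagnetism forces $\lambda_e=\tanh J_e\ge 0$, are exactly the opening moves of~\cite{JS93}. But Jerrum and Sinclair do \emph{not} then reduce even subgraphs to weighted matchings via vertex gadgets. Instead they enlarge the state space to the ``subgraphs world'': the set of \emph{all} edge subsets $S$, weighted by $\mu^{|\partial S|}\prod_{e\in S}\lambda_e$ for a suitable parameter $\mu$. They show that (i) the even subgraphs carry at least an inverse-polynomial fraction of the total weight in this enlarged space, and (ii) the single-edge flip chain on this space mixes rapidly, via exactly the canonical-paths/congestion machinery you describe. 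Your proposed gadget step is the weak link: a parity constraint (even degree at every vertex) is not naturally encoded by a monomer--dimer system with nonnegative weights. The standard Fisher-type constructions reduce Ising to \emph{perfect} matchings, not all matchings, and approximating the number of perfect matchings in general (non-bipartite) graphs is not known to admit an FPRAS. So as written, your reduction either fails to preserve the partition function or lands you on a problem with no known polynomial-time sampler. The later ingredients of your plan---canonical paths for the spectral gap and self-reducibility for counting from sampling---are correct and are what~\cite{JS93} uses, but applied to the subgraphs-world chain rather than to a matching chain.
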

More precisely, the algorithm proposed in Ref.~\cite{JS93} has running time $poly(n,\delta^{-1})$
which is independent on $J$,
as long as the cost of arithmetic operations with $J_{u,v}$ can be neglected\footnote{Intuitively, the lack of dependence on $J$ reflects the fact that
zero-temperature properties of the ferromagnetic Ising model are trivial --- all spins 
are oriented in the same direction. Since we have included  the inverse temperature into
the coefficients $J_{u,v}$, the limit $J\to \infty$ corresponds to the zero temperature.
Similarly, if $J_{u,v}\to \infty$ for some pair of spins $u,v$, the problem with $n$ spins
can be reduced to the problem with $n-1$ spins by merging $u$ and $v$ into one effective spin. 
}.
To go from Theorem~\ref{thm:IM} to Theorem~\ref{thm:TIM} we employ the standard quantum-to-classical mapping 
based on the Suzuki-Trotter formula. The only new ingredient that we add is a proof
that the Suzuki-Trotter approximation
leads to a small multiplicative  error (as opposed to the additive error usually studied in the literature). 
We emphasize that although the proposed FPRAS for TIM is efficient in the complexity theory sense, it
can hardly be used in practice. A rigorous upper bound on the running time
of the FPRAS obtained in  the proof of Theorem~\ref{thm:TIM}
is $O(n^{59} J^{21} \delta^{-9})$. Clearly, this leaves a lot of room for improvements.

\section{Discussion and previous work}
\label{subs:previous}

The class of stoquastic Hamiltonians encompasses many interesting
quantum spin models originated both from the condensed matter physics and the
quantum computing field. Well-known examples
include TIM, the Heisenberg ferromagnetic and antiferromagnetic models
(the latter can be made stoquastic on any bipartite graph), quantum annealing
Hamiltonians~\cite{Farhi2001quantum,Somma2012}, the toric code Hamiltonian~\cite{Kitaev2003fault},
and Hamiltonians derived from reversible Markov 
chains~\cite{Aharonov2003adiabatic,Henley2004classical,Verstraete2006criticality}.
The definition of stoquastic Hamiltonians can be naturally extended to higher-dimensional spins and 
to bosonic systems. Notable examples of models in this category are the quantum double Hamiltonian~\cite{Kitaev2003fault},
bosonic Hubbard model~\cite{Wessel2004}, 
and Hamiltonians describing ``flux-type" Josephson junction qubits~\cite{Devoret2004}.
Identifying ``easy" and ``hard" instances of stoquastic Hamiltonians is therefore
important as it could give insights on the power and limitations of quantum
Monte Carlo algorithms~\cite{Hetherington1984,Hastings2013} and
contribute to our understanding of 
speedups in quantum annealing algorithms~\cite{Farhi2009quantum,Farhi2012}.

Complexity of stoquastic Hamiltonians has been partially characterized
in Ref.~\cite{BDOT06} by proving that Stoq-LH is hard for the complexity class $\MA$ 
and contained in the class $\AM$.
Here $\MA$ and $\AM$ are probabilistic analogues of $\NP$ with 
one and two rounds of communication between the prover and the verifier respectively~\cite{Babai1985}.
It was shown~\cite{BDOT06} that 
the complexity of Stoq-LH does not depend on the locality parameter $k$
as long as $2\le k\le O(1)$. A closely related problem of verifying 
consistency of local reduced density matrices with non-negative matrix elements
was studied by Liu~\cite{Liu2007}.
Finally, Stoq-LH is contained in the class $\QMA$ (a quantum
analogue of $\NP$) since estimating the ground state energy of a general local Hamiltonian
is known to be $\QMA$-complete problem~\cite{KSV}. 
Interestingly, the problem of estimating the {\em largest} eigenvalue of a local stoquastic
Hamiltonian was shown to be $\QMA$-complete by Jordan et at~\cite{Jordan2010}.

TIM occupies a special place in the family of stoquastic Hamiltonians due to its simplicity
and a vast body of work devoted to it. 
In particular, TIM defines one of the four 
classes in the complexity classification of $2$-local quantum Hamiltonians
developed recently by Cubitt and Montanaro~\cite{Cubitt2013complexity}
(with the other three classes being $\P$, $\NP$, and $\QMA$). 
Experimental implementation of quantum annealing algorithms 
based on TIM  has been demonstrated~\cite{Boixo2013}, see also~\cite{Ronnow2014,Shin2014quantum}
for possible interpretation of these experiments. 
It was shown recently that any $k$-local stoquastic Hamiltonian can be represented as an effective low-energy
theory emerging from TIM on a constant-degree graph~\cite{BH14}. As a consequence, 
Stoq-LH for TIM Hamiltonians is as hard as Stoq-LH for general  stoquastic Hamiltonians.
Theorem~\ref{thm:TIM} implies that a presence of antiferromagnetic spin couplings
in a TIM Hamiltonian is essential for this hardness result. 

To put Theorem~\ref{thm:TIM} in a broader context, let us briefly discuss the previous work
on complexity of the {\em classical} ferromagnetic  Ising model with local magnetic fields,
\[
H=-\sum_{1\le u<v\le n} J_{u,v} Z_u Z_v -\sum_{1\le u\le n} h_u Z_u, \quad \quad J_{u,v}\ge 0.
\]
In the case when the local  fields are uniform, i.e. $h_u\ge 0$ or $h_u\le 0$ for all $u$,
the minimum energy problem is trivial since the ground state is given by
$Z_u=+1$ or $Z_u=-1$ for all $u$ respectively. 
In the general case when the local fields may take both positive and negative values,
the minimum energy of $H$ can be computed in time $O(n^3)$ by a reduction to the Maximum Flow
problem~\cite{Rieger97}. Suppose now that ones goal is to compute the partition function
${\calZ}=\trace{\, e^{-H}}$. 
In the case of uniform  local fields $\calZ$ is known to  admit FPRAS while
an exact computation of $\calZ$ is  $\#P$-hard~\cite{JS93}.
In the general case when the local fields may have both positive and negative signs, 
approximating $\calZ$ with a small multiplicative error is known to be as
hard as the approximating the number of independent sets in a bipartite graph which is unlikely to have
a polynomial time algorithm~\cite{GoldbergJerrum2005}. 
Finally, we note that if one does not insist on rigorous bounds on the running time, 
there exist alternative more practical algorithms for approximating the partition function $\calZ$,
such as the Swendsen-Wang algorithm~\cite{SW87}.
The latter is known to have the running time growing exponentially with $n$ in the case
of the ferromagnetic $3$-state Potts model~\cite{Gore1999}.

\section{Guided Stoquastic Hamiltonians}

In this section we prove Theorem~\ref{thm:guided}.
We start from the containment in $\MA$ which is by far  the most difficult part. 
 We shall describe a classical probabilistic algorithm that takes as input
a problem instance  $\calI=(n,H=\sum_{\alpha} H_\alpha,\lambda_{yes},\lambda_{no})$
and a witness string $\calW$ of length $poly(n)$. The algorithm runs in time $poly(n)$ and outputs
`accept' or `reject'. Let $P_{acc}=P_{acc}(\calI,\calW)$ be the acceptance probability.
The statement that  Guided Stoq-LH is contained
in promise-$\MA$ is equivalent to the following conditions.\\
{\bf Completeness:} If $\calI$ is a yes-instance then $P_{acc}(\calI,\calW)\ge 2/3$ for some witness $\calW$.\\
{\bf Soundness:} If $\calI$ is a no-instance then $P_{acc}(\calI,\calW)\le 1/3$ for any witness $\calW$.\\
Following the standard terminology, we shall refer to the party running the algorithm as Arthur
and the party providing the witness as Merlin. 

Let $(n,H=\sum_{\alpha} H_\alpha,\lambda_{yes},\lambda_{no})$ be an instance of Guided Stoq-LH.
Let $\lambda$ be the ground state energy of $H$. One can easily check that 
$-J\le \lambda \le J$, where 
 $J\equiv \sum_\alpha \|H_\alpha\|$. 
 We can assume that  $-J\le \lambda_{yes}<\lambda_{no}\le J$  since otherwise the problem becomes trivial. 
To convince Arthur that  $\lambda\le \lambda_{yes}$ 
Merlin will present a witness that consists of three parts:
\begin{itemize}
\item A real number $-J\le \lambda_M\le \lambda_{yes}$
\item A classical circuit that computes
some function $\phi_M\, : \, \{0,1\}^n \to \RR$.
\item A binary string $x_M\in \{0,1\}^n$
\end{itemize}
Let us agree that for a yes-instance $\lambda_M=\lambda$ is the ground state energy of $H$
while the function $\phi_M$ computes amplitudes of some guiding  state for $H$.
The string $x_M$ must satisfy several technical conditions  stated below.
For a no-instance Merlin may try to cheat, that is,  $\lambda_M, \phi_M, x_M$  could be arbitrary. 
We shall assume that Arthur rejects the witness right away if it does not fit the specified format,
for example, if $\lambda_M>\lambda_{yes}$.

\subsection{Verification algorithm}
To define Arthur's verification algorithm  we first  convert $H$ into a sparse non-negative matrix $G$ such that 
$\|G\|=1$ for yes-instances and $\|G\| <1$ for no-instances. 
To this end choose  $\beta=1/(2J)$ and define  
\begin{equation}
\label{GreenFunction}
G=I-\beta(H-\lambda_M I).
\end{equation}
Note that $\| \beta(H-\lambda_M I)\| \le \beta(\|H\| + |\lambda_M|) \le 2\beta J=1$. This shows that 
$G$ is a  positive semidefinite operator with real 
non-negative matrix elements in the standard basis  for both yes- and no-instances.
The operator norm of $G$ is equal to its largest eigenvalue, that is, 
\begin{equation}
\label{mu}
\|G\|=1-\beta(\lambda-\lambda_M).
\end{equation}
Furthermore,  $\psi$ is a ground state of $H$ iff $\psi$ is an eigenvector
of $G$ with the eigenvalue $\|G\|$. For a yes-instance $\lambda_M=\lambda$, that is
$\|G\|=1$. For a no-instance $\lambda\ge \lambda_{no}$ whereas $\lambda_M\le \lambda_{yes}$.
Therefore 
\begin{eqnarray}
\mbox{yes-instance} &\Rightarrow& \|G\|=1 \\
\mbox{no-instance} &\Rightarrow& \|G\| \le 1-\Delta, \label{Gnorm-no} 
\end{eqnarray}
where $\Delta$ is a ``decision gap" defined as
\begin{equation}
\label{Delta}
\Delta=\beta(\lambda_{no}-\lambda_{yes})\ge poly(1/n).
\end{equation}

The next step is to convert $G$ into  a random walk. Here we adopt 
the Projection Monte Carlo method with a variable number of walkers 
 which has been previously proposed in the context of  Quantum Monte Carlo simulations~\cite{CerfOlivier95,Roberto}.
(As was pointed out in Ref.~\cite{CerfOlivier95}, this method has a more favorable scaling
of statistical fluctuations in comparison to more widely used Green's function
 Monte Carlo~\cite{Hetherington1984,Trivedi1989,Buonaura1998}.)
A state of the walk is defined as a function 
\begin{equation}
\label{gamma}
\gamma\, : \, \{0,1\}^n \to \{0,1,2,\ldots \}
\end{equation}
that assigns a non-negative integer $\gamma(x)$ to each binary string $x\in \{0,1\}^n$.
Loosely speaking, the function $\gamma$ describes a population
of walkers distributed over points of the Boolean cube 
$\{0,1\}^n$. The meaning of $\gamma(x)$ is the occupation number of a point $x$.
We shall say that a point $x$ is {\em empty} or {\em occupied} if $\gamma(x)=0$ or $\gamma(x)\ge 1$ 
respectively. 
Our definition of the walk will depend on three parameters: the number of steps $L$, 
a cutoff population size $\Gamma_{max}$, and a cutoff 
guiding state amplitude $\phi_{min}$. We shall choose $L=poly(n)$, 
$\Gamma_{max}=poly(n)$, and $\phi_{min}=2^{-n-1}$.
A walk with $L$ steps is  a random sequence
of states $\gamma_0,\gamma_1,\gamma_2,\ldots,\gamma_L$.
We choose the initial state $\gamma_0$ as
\begin{equation}
\label{init}
\gamma_0(x)=\left\{ \begin{array}{rcl}
1 &\mbox{if} & x=x_M,\\
0 && \mbox{otherwise.} \\
\end{array}\right.
\end{equation}
Here  $x_M$ is the string received from Merlin.
Transition probabilities of the walk will be related to matrix elements of $G$. Define
a non-negative matrix $P$ of size $2^n$ such that 
\begin{equation}
\label{P(x,y)}
\langle x|P|y\rangle=\frac{\phi(y)}{\phi(x)} \, \langle x|G|y\rangle,
\end{equation}
where $\phi$ is a ``regularized version" of $\phi_M$ defined as
\begin{equation}
\label{regularized}
\phi(x)=\left\{ \begin{array}{rcl}
\phi_M(x) &\mbox{if}& \phi_{min}\le \phi_M(x)\le 1, \\
1 &\mbox{if} & \phi_M(x)>1, \\
\phi_{min} &\mbox{if} & \phi_M(x)<\phi_{min}. \\
\end{array}
\right.
\end{equation}
Recall that $\phi_{min}=2^{-n-1}$.
Note that the circuit computing $\phi_M(x)$ can be easily converted to the one
computing $\phi(x)$ without substantial increase in size. 
The matrix element  $\langle x|P|y\rangle$ will determine the rate
 at which walkers are created at a point $y$ at step $t+1$ per each walker located
at a point $x$ at step $t$. More formally, given a real number $p\ge 0$ let
$\pois{p}$ be a non-negative integer random variable drawn from the Poisson distribution with the mean $p$.
In other words, $k= \pois{p}$ iff $\prob{k}=e^{-p} p^k/k!$ for $k\ge 0$. 
Let us agree that $\pois{0}=0$ with probability one. 
For each $y\in \{0,1\}^n$ define
\begin{equation}
\label{next}
\gamma_{t+1}(y)=\sum_{x\in \{0,1\}^n} \pois{\gamma_t(x) \langle x|P|y\rangle},
\end{equation}
where all terms 
represent independent Poisson variables. We shall need these
well-known properties of the Poisson distribution:
\begin{equation}
\label{facts}
\expect{\pois{p}}=p, \quad \expect{\pois{p}^2}=p^2+p, \quad \pois{p}+\pois{q}=\pois{p+q}.
\end{equation}
The last equality involves a sum of two independent Poisson-distributed variables.
The total  population size at a step $t$ is
\begin{equation}
\label{N}
\Gamma_t=\sum_{x\in \{0,1\}^n} \gamma_t(x).
\end{equation}
 Arthur's verification algorithm is defined as follows.
\begin{enumerate}
\item Receive $\lambda_M,\phi_M,x_M$ from Merlin. 
\item Make $L$ steps of the random walk defined above.
Abort and reject unless $\Gamma_t\le \Gamma_{max}$ for all steps $t$. 
\item Accept if $\Gamma_L\ge 1$. Reject if $\Gamma_L=0$. 
\end{enumerate}
It is worth pointing out that if the population becomes empty at some step, that is, $\Gamma_t=0$, 
then $\Gamma_{t'}=0$ for all $t'>t$ with probability one.  Thus if Arthur observes $\Gamma_t=0$
at some step, he can safely abort the protocol and reject the witness right away. 
  
First let us  check that Arthur can implement the above algorithm in polynomial time
for any Merlin's witness.
Suppose Arthur has already implemented the first $t$ steps of the algorithm
for some $t\ge 0$ and needs to implement the next step. 
Obviously, the number of occupied points  at step $t$ is at most $\Gamma_t$.
Since Arthur has not aborted the algorithm yet, one has  $\Gamma_t\le \Gamma_{max}$ 
and thus there are at most $poly(n)$ occupied points. 
Arthur can store the function $\gamma_t(x)$ efficiently as a list of 
pairs $(x,\gamma_t(x))$ which includes only occupied points $x$. 
Next Arthur needs to generate the function $\gamma_{t+1}$
according to Eq.~(\ref{next}). 
Note that if some point $x$
is empty at step $t$, that is, $\gamma_t(x)=0$, such point does not contribute to the sum in Eq.~(\ref{next})
since $\pois{0}=0$ with probability one. By the same reason, a point $y$ can be
occupied at step $t+1$ with a non-zero probability only if $\langle x|P|y\rangle>0$ for some point $x$ which
is occupied at step $t$. 
The number of such points $y$ is at most $poly(n)$
since $G$ (and thus $P$) has at most $poly(n)$ non-zero matrix
elements in each row.  Hence for a given function $\gamma_t$
there are at most $poly(n)$ Poisson variables that Arthur has to 
generate in order to determine the function $\gamma_{t+1}$. To generate each of those variables
Arthur has to compute $\langle x|P|y\rangle$. This requires computing $\langle x|G|y\rangle$
and the ratio 
$\phi(y)/\phi(y)$. Both computations can be done in time $poly(n)$ since
$G$ is a sum of local operators while $\phi$ is described
by a polynomial-size circuit.
Finally, generating a Poisson random variable with a specified mean 
can be done in constant time. To avoid complications related to approximating the Poisson distribution we shall
assume that Arthur has an access to a device that takes as input a mean  $p$ and outputs a
random non-negative integer drawn from $\pois{p}$.

\subsection{Miscellaneous} 

Here we state some basic facts needed for the proof of completeness and soundness conditions.
Define a non-negative state 
\begin{equation}
\label{phi}
|\phi\rangle=\sum_{x\in \{0,1\}^n} \phi(x)\, |x\rangle,
\end{equation}
where $\phi(x)$ is the regularized version of $\phi_M$ defined in Eq.~(\ref{regularized}).
Note that $\phi$ may or may not be normalized. 
We shall need the first and the second moments of $\Gamma_t$. Here and below
the probability distribution of $\Gamma_t$ is
obtained by iterating
Eq.~(\ref{next}) {\em without} imposing a constraint $\Gamma_t\le \Gamma_{max}$.
\begin{lemma}
\label{lemma:moments}
For any $t=1,\ldots,L$ one has 
\label{lemma:moments}
\begin{equation}
\label{first}
\expect{\Gamma_t}  =\frac1{\phi(x_M)}  \langle x_M|G^t| \phi \rangle,
\end{equation}
and
\begin{equation}
\label{second}
 \expect{\Gamma_L^2}=
 \frac1{\phi(x_M)}\sum_{s=0}^L \sum_{y\in \{0,1\}^n} \frac1{\phi(y)}\cdot 
 \langle x_M|G^s |y\rangle \cdot \langle y|G^{L-s}|\phi\rangle^2.
\end{equation}
\end{lemma}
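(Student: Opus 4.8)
The plan is to compute both moments directly from the defining recursion~(\ref{next}), exploiting the two elementary facts in~(\ref{facts}): additivity of independent Poisson variables and the formula $\expect{\pois{p}^2}=p^2+p$. The key observation that makes everything work is that, although the recursion is stochastic, the \emph{expectation} of $\gamma_{t+1}(y)$ is a linear function of $\gamma_t$ with the matrix $P$ as its generator. Concretely, I would first establish the single-site identity
\begin{equation}
\label{planeq1}
\expect{\gamma_{t+1}(y)\mid \gamma_t}=\sum_x \gamma_t(x)\langle x|P|y\rangle,
\end{equation}
which is immediate from $\expect{\pois{p}}=p$ and linearity. Conjugating $P$ back to $G$ via $\langle x|P|y\rangle=\phi(y)\phi(x)^{-1}\langle x|G|y\rangle$, one sees that the rescaled occupation $\phi(x)^{-1}$... more precisely, it is cleanest to track the quantity $\langle \mathbf{1}|\,\cdot$ against the fixed vector $\phi$: define $f_t=\sum_x \phi(x)^{-1}\gamma_t(x)\,\langle x|$ — no, simpler: iterate~(\ref{planeq1}) and use $\gamma_0=|x_M\rangle$ to get $\expect{\gamma_t(y)}=\phi(x_M)^{-1}\langle x_M|G^t|y\rangle\,\phi(y)$, then sum over $y$ and note $\sum_y \phi(y)|y\rangle=|\phi\rangle$ to obtain~(\ref{first}). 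The telescoping through $P^t$ with the $\phi$-factors cancelling between consecutive sites is the one routine computation here.

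For the second moment~(\ref{second}) the strategy is a variance-type recursion exploiting independence. Conditioned on $\gamma_t$, the variables $\{\gamma_{t+1}(y)\}_y$ are independent sums of independent Poisson variables, so $\Var{\Gamma_{t+1}\mid\gamma_t}=\sum_y \Var{\gamma_{t+1}(y)\mid\gamma_t}=\sum_y\sum_x \gamma_t(x)\langle x|P|y\rangle=\sum_x\gamma_t(x)\,r(x)$, where $r(x)=\sum_y\langle x|P|y\rangle$ is the mean total number of offspring of a single walker at $x$; the point is that the conditional variance equals the conditional mean of $\Gamma_{t+1}$ because each offspring count is Poisson. Writing $\expect{\Gamma_L^2}$ by conditioning on $\gamma_t$ at each step and peeling off one layer at a time, one gets a sum over the step $s$ at which a ``fluctuation is injected,'' weighted by: the probability-like amplitude of having a walker reach site $y$ at step $s$ (a factor $\phi(x_M)^{-1}\langle x_M|G^s|y\rangle\,\phi(y)$ from the first-moment formula), the per-walker fluctuation factor $\phi(y)^{-1}$ (the excess of $\expect{\pois{p}^2}$ over $p^2$ being exactly $p$, combined with the $\phi(y)/\phi(x)$ rescaling in $P$), and the squared expected contribution of that walker's descendants to $\Gamma_L$, which by the first-moment calculation started from $|y\rangle$ is $\big(\phi(y)^{-1}\langle y|G^{L-s}|\phi\rangle\big)^2\phi(y)^2$. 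Collecting the $\phi(y)$ powers — two from the forward amplitude, minus one from the Poisson fluctuation, minus two from the squared descendant amplitude, net $\phi(y)^{-1}$ — reproduces exactly the claimed expression.

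The step I expect to be the main obstacle is the bookkeeping in the second-moment recursion: making precise the ``peel off one layer'' argument so that the cross terms are handled correctly. The clean way is to prove by downward induction on $s$ the intermediate identity
\begin{equation}
\label{planeq2}
\expect{\Gamma_L^2}=\sum_{s=t}^{L}\sum_y \frac{\expect{\gamma_t(y')}\cdots}{\cdots}
\end{equation}
— rather, to introduce $g(y):=\phi(y)^{-1}\langle y|G^{L-s}|\phi\rangle=\expect{\Gamma_L\text{ started from one walker at }y}$ and show $\expect{\Gamma_{s}^2\text{-contribution}}$ satisfies a one-step recursion in which the new term is $\sum_y \expect{\gamma_s(y)}\,g(y)^2$ and the old terms propagate with the right means; here the independence across sites at a fixed step, and the martingale-like property $\expect{\,(\sum_y \gamma_{s+1}(y)g(y))\mid\gamma_s\,}=\sum_x\gamma_s(x)g(x)$ inherited from~(\ref{planeq1}) with $g$ the appropriate harmonic-type function, are what keep the cross terms from proliferating. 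Once this recursion is set up, unrolling it from $s=L$ down to $s=0$ (using $\Gamma_L^2$ at $s=L$ contributing the ``diagonal'' $\sum_y \gamma_L(y)$ plus $\sum_y\gamma_L(y)(\gamma_L(y)-1)$... — i.e. the base case is just $\expect{\pois{p}^2}=p^2+p$ applied site-wise) gives~(\ref{second}) after substituting the first-moment formula for each $\expect{\gamma_s(y)}$. No step requires more than the three Poisson facts already recorded, so modulo careful indexing the proof is mechanical.
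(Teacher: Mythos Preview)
Your proposal is correct and follows essentially the same approach as the paper. The paper streamlines the bookkeeping by introducing the auxiliary variable $\Gamma_{t,s}=\sum_{y,z}\gamma_t(y)\langle y|P^s|z\rangle$ (which is precisely your $\sum_y\gamma_t(y)g(y)$ with $g(y)=\sum_z\langle y|P^{L-t}|z\rangle$), proves the one-step recursion $\expect{\Gamma_{t,s}^2}=\expect{\Gamma_{t-1,s+1}^2}+\sum_y\langle x_M|P^t|y\rangle\big(\sum_z\langle y|P^s|z\rangle\big)^2$ using exactly the conditional-independence and Poisson-variance facts you invoke, and then unrolls from $(t,s)=(L,0)$ down to $(0,L)$; your ``peel off one layer'' argument with the martingale-type propagation of $g$ is the identical computation in slightly different notation.
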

The proof based on Eq.~(\ref{facts}) is a straightforward calculation, so we 
postpone it until the end of this section.
\begin{lemma}
\label{lemma:guide}
Suppose $H$ admits a guiding state. Then there exists at least one guiding state
$\phi$ such that $\langle x|\phi\rangle \ge 2^{-n-1}$ for all $x\in \{0,1\}^n$.
\end{lemma}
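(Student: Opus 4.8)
The plan is to start from an arbitrary guiding state $\chi$ for $H$ (which exists by Promise 2 / Definition~\ref{dfn:guided}) and modify it into a guiding state $\phi$ whose amplitudes are uniformly bounded below by $2^{-n-1}$. Recall that by hypothesis there is a ground state $\psi$ of $H$ with nonnegative amplitudes and $\langle x|\chi\rangle \ge \langle x|\psi\rangle/poly(n)$ for all $x$, with $x\mapsto\langle x|\chi\rangle$ computed by a $poly(n)$-size circuit. The obvious candidate is to mix $\chi$ with the uniform superposition: set $|\phi\rangle = a|\chi\rangle + b\,2^{-n/2}\sum_x |x\rangle$ for suitable constants $a,b>0$, so that every amplitude picks up the additive floor $b\,2^{-n/2}$.

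First I would check the floor condition: since $\langle x|\chi\rangle\ge 0$, we get $\langle x|\phi\rangle \ge b\,2^{-n/2} \ge b\,2^{-n}$, so choosing $b$ a fixed constant (say $b=1/2$, or even just $b$ such that $b\,2^{-n/2}\ge 2^{-n-1}$, which holds for all $n\ge 1$ with $b=1$) immediately gives the desired bound $\langle x|\phi\rangle\ge 2^{-n-1}$. Second, I would verify the pointwise-correlation condition~(\ref{guiding}) for $\phi$ relative to the \emph{same} ground state $\psi$: we have $\langle x|\phi\rangle = a\langle x|\chi\rangle + b\,2^{-n/2} \ge a\langle x|\chi\rangle \ge (a/poly(n))\langle x|\psi\rangle$, so $\phi$ inherits a pointwise correlation with $\psi$ with only the polynomial factor rescaled by the constant $a$ — still $poly(n)$. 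Third, the efficient-amplitude condition is preserved: $x\mapsto\langle x|\phi\rangle = a\langle x|\chi\rangle + b\,2^{-n/2}$ is computed by appending a constant-size arithmetic gadget to the circuit for $\chi$, so it still has size $poly(n)$. Finally, $\phi$ can be normalized (Definition~\ref{dfn:guided} asks for a normalized guiding state); normalization multiplies every amplitude by the same positive constant $1/\||\phi\rangle\|$, and since $\||\phi\rangle\|^2 = a^2 + b^2 + 2ab\,2^{-n/2}\langle\chi|\sum_x x\rangle \le (a+b)^2$ is bounded above by a constant, the floor $\langle x|\phi\rangle\ge 2^{-n-1}$ survives after renormalization (possibly with a slightly adjusted constant $b$, or by absorbing the bounded normalization factor — e.g. pick $b$ so that $b\,2^{-n/2}/(a+b) \ge 2^{-n-1}$, which holds for fixed $a,b$ with $b/(a+b)\ge 1/2$).

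The only mild subtlety — the place where one must be slightly careful rather than the place where anything genuinely hard happens — is bookkeeping the interaction between the additive floor, the normalization constant, and the exact threshold $2^{-n-1}$: one wants the chosen constants $a,b$ to make all three conditions hold simultaneously for every $n\ge 1$. This is a one-line choice of constants and not a real obstacle. I would remark that the exponentially small floor $2^{-n-1}$ is harmless precisely because it is still far larger than the $2^{-n/2}$-scale uniform-superposition amplitude would allow room for, and in particular it does not spoil the $poly(1/n)$ overlap $\langle\psi|\phi\rangle$: indeed $\langle\psi|\phi\rangle = a\langle\psi|\chi\rangle + b\,2^{-n/2}\langle\psi|\sum_x x\rangle/2^{n/2}\cdot 2^{n/2}$... more simply, $\langle\psi|\phi\rangle \ge a\langle\psi|\chi\rangle \ge poly(1/n)$ since $\chi$ was a genuine guiding state. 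This completes the argument.
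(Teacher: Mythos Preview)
Your proposal is correct and is essentially the same argument as the paper's: the paper defines $\phi(x)=C_n(\omega(x)+2^{-n})$ for a given guiding state $\omega$ and a normalization constant $C_n$, shows $1-2^{-\Omega(n)}\le C_n\le 1$ via Cauchy--Schwarz on $\sum_x\omega(x)$, and reads off the floor $\phi(x)\ge 2^{-n-1}$ and the preserved pointwise-correlation and circuit-size conditions. Your version with the mix $a|\chi\rangle+b\,2^{-n/2}\sum_x|x\rangle$ is the same idea with a different (larger) additive floor and explicit constants $a,b$; the only cosmetic difference is that you bound the normalization by $(a+b)$ whereas the paper computes it directly.
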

\begin{proof}
Indeed, let $|\omega\rangle=\sum_x \omega(x)\, |x\rangle$ be some guiding state.
By definition,  it means that $\omega(x)\ge 0$ for all $x$, $\sum_x \omega^2(x)=1$,
the function $x\to \omega(x)$ can be computed by a polynomial-size circuit, 
and $H$ has a non-negative normalized ground state $\psi$ such that 
$\langle x|\psi\rangle \le poly(n)\cdot \omega(x)$ for all $x$. Define a function
\[
\phi(x)=C_n(\omega(x)+2^{-n})
\]
where $C_n>0$ is a constant
chosen such that  $\sum_x \phi^2(x)=1$. Simple algebra shows that 
$1-2^{-\Omega(n)} \le C_n \le 1$, that is, $C_n \approx 1$ for  large $n$. 
Thus $\phi(x) \ge 2^{-n-1}$ and $\langle x|\psi\rangle \le C^{-1}_n poly(n) \phi(x)=poly(n) \phi(x)$ for all $x$.
The function $x\to \phi(x)$ has a polynomial-size circuit
since one extra addition and a multiplication by a constant can only
increase the circuit size by $poly(\log{(n)})$. Therefore $|\phi\rangle=\sum_x \phi(x)\, |x\rangle$ is the desired guiding state. 
\end{proof}
\subsection{Proof of soundness}

Consider a no-instance. 
We have to prove that the acceptance probability $P_{acc}$ is small for any witness
$\lambda_M,\phi_M,x_M$. One can get an upper bound on $P_{acc}$ by omitting the tests
$\Gamma_t\le \Gamma_{max}$ since Arthur rejects whenever one of these tests fails.
Thus
\begin{equation}
\label{no1}
P_{acc} \le \prob{ \Gamma_L\ge 1} \le \expect{\Gamma_L}  = \frac1{\phi(x_M)} 
\langle x_M|G^L|\phi\rangle \le \frac{\|\phi\|}{\phi(x_M)} \cdot \|G\|^L.
\end{equation} 
Here we  used Eq.~(\ref{first}). By definition of $\phi$ 
one has $2^{-n-1}\le \phi(x)\le 1$ for all $x\in \{0,1\}^n$, see Eq.~(\ref{regularized}). 
Thus $\|\phi\|\le 2^{n/2}$ and $\phi(x_M)\ge 2^{-n-1}$. 
Furthermore, $\|G^L\| =\|G\|^L \le (1-\Delta)^L$, where 
$\Delta=\beta(\lambda_{no}-\lambda_{yes})$ is the decision gap,
see Eq.~(\ref{Gnorm-no},\ref{Delta}).
Hence 
\begin{equation}
\label{no2}
P_{acc}\le 2^{O(n)} (1-\Delta)^L \le 2^{O(n)} e^{-\Delta L}.
\end{equation}
Here we used a bound $1-s\le e^{-s}$ which holds  for all $s\ge 0$. 
Recall that $\Delta\ge poly(1/n)$.
Hence choosing  $L=\Omega(n\Delta^{-1})=poly(n)$ we can make $P_{acc}\le 2^{-n}$
for any no-instance and for any Merlin's witness. 

\subsection{Proof of completeness}

Consider a yes-instance and Merlin's witness $(\lambda_M,\phi_M,x_M)$,
where $\lambda_M=\lambda$ is the ground state energy and
$\phi_M$ computes amplitudes  of some guiding state satisfying conditions of Lemma~\ref{lemma:guide}.
Then  the regularized version of  $\phi_M$ defined in Eq.~(\ref{regularized})
coincides with $\phi_M$. Hence $|\phi\rangle=\sum_x \phi(x)\, |x\rangle$ is a normalized 
guiding state for $H$. 
We will prove that $P_{acc}\ge poly(1/n)$ for some choice of the string $x_M$
(although the proof is not constructive). Recall that $x_M$ determines
the initial state of the walk, see Eq.~(\ref{init}).
 Indeed,  by definition of the protocol, Arthur accepts iff
$\Gamma_t\le \Gamma_{max}$ for all $t=1,\ldots,L$
and $\Gamma_L\ge 1$. By the union bound,
\begin{equation}
\label{yes1}
1-P_{acc}\le \prob{\Gamma_L=0} + \sum_{t=1}^L \prob{\Gamma_t>\Gamma_{max}}.
\end{equation}
Since $\Gamma_L$ takes non-negative integer values, one can use the second moment bound:
\begin{equation}
\label{yes2}
\prob{\Gamma_L\ge 1} \ge \frac{\expect{\Gamma_L}^2}{\expect{\Gamma_L^2}}.
\end{equation}
By Markov's inequality, $\prob{\Gamma_t>\Gamma_{max}} \le \expect{\Gamma_t}/\Gamma_{max}$. 
Hence
\begin{equation}
\label{yes2}
P_{acc}\ge \frac{\expect{\Gamma_L}^2}{\expect{\Gamma_L^2}} -\frac1{\Gamma_{max}}\sum_{t=1}^L \expect{\Gamma_t}.
\end{equation}
It suffices to show that there exists a string $x_M$  and $poly(n)$ functions $p(n),q(n)$
independent of $\Gamma_{max}$ such that 
\begin{equation}
\label{yes3}
\frac{\expect{\Gamma_L^2}}{\expect{\Gamma_L}^2} \le p(n)
\quad \mbox{and} \quad
\expect{\Gamma_t} \le q(n) \quad \mbox{for all $t=1,\ldots,L$}.
\end{equation}
(Note that all expectation values above depend on $x_M$.)
Indeed, in this case Eq.~(\ref{yes2}) implies
\begin{equation}
\label{yes4}
P_{acc}\ge \frac1{p(n)} - \frac{Lq(n)}{\Gamma_{max}} \ge \frac1{2p(n)}
\end{equation}
if we choose $\Gamma_{max}=2p(n)q(n)L=poly(n)$.
Let us now prove existence of a string $x_M$ satisfying Eq.~(\ref{yes3}).
Since $\phi$ is a guiding state for $H$, there exists  a non-negative normalized ground state 
$|\psi\rangle = \sum_x \psi(x)\, |x\rangle$ such that
\begin{equation}
\label{yes5}
\psi(x)\le r(n)\cdot \phi(x) \quad \mbox{for all $x$},
\end{equation}
where $r(n)\le poly(n)$, see Definition~\ref{dfn:guided}.
Define a set 
\begin{equation}
\label{Sgood}
\calS=\{ x\in \{0,1\}^n \, : \, \frac{\psi(x)}{\phi(x)} \ge \frac{\langle \psi|\phi\rangle}2 \}
\end{equation}
and a probability distribution 
\begin{equation}
\label{pi}
\pi(x)=\frac{\psi(x)\phi(x)}{\langle \psi|\phi\rangle}.
\end{equation}
(One may think of $\pi$ as a ``steady state" of $P$ since $\pi P=\pi$.)
We claim that 
\begin{equation}
\label{Slikely}
\pi(\calS)\equiv \sum_{x\in \calS} \pi(x) \ge \frac12.
\end{equation}
Indeed, one has
\begin{equation}
\label{Slikely1}
1 = \sum_x \pi(x)
=\pi(\calS) + \langle\psi|\phi\rangle^{-1} \sum_{x\notin \calS} \phi^2(x) \cdot \frac{\psi(x)}{\phi(x)}
 \le \pi(\calS) + \frac12 \sum_{x\notin \calS} \phi^2(x)  \le \pi(\calS) +\frac12.
\end{equation}
Here the last bound uses normalization of $\phi$. This proves Eq.~(\ref{Slikely}). 
For each $t=1,\ldots,L$ define a function 
\[
V_t(x)=\frac{\langle x|G^t|\phi\rangle}{\phi(x)}.
\]
From Eq.~(\ref{first}) one infers that $\expect{\Gamma_t}=V_t(x_M)$. 
Furthermore,
\begin{equation}
\label{yes6}
\sum_x \pi(x) V_t(x) = \langle\psi|\phi\rangle^{-1} \sum_x \psi(x) \langle x|G^t|\phi\rangle=1
\end{equation}
since $G^t\psi=\psi$. Next define a function 
\[
W(x)= \sum_{s=0}^L \sum_{y} \frac{\langle x|G^s|y\rangle \cdot \langle y| G^{L-s} |\phi\rangle^2}{\phi(x)\phi(y)}.
\]
Here the second sum is over all $y\in \{0,1\}^n$. 
From  Eq.~(\ref{second}) one infers that $\expect{\Gamma_L^2}=W(x_M)$.
Taking into account that $G^s\psi=\psi$ for any $s$ one gets
\begin{equation}
\label{yes7}
\sum_x \pi(x) W(x) =
\langle \psi|\phi\rangle^{-1} \sum_{s=0}^L  \sum_{y}  \frac{\psi(y)}{\phi(y)}\cdot   \langle y| G^{L-s} |\phi\rangle^2.
\end{equation}
Note that $\psi(y)/\phi(y)\le r(n)$ and
$\langle \psi|\phi\rangle=\sum_x \psi^2(x) \phi(x)/\psi(x) \ge 1/r(n)$
due to Eq.~(\ref{yes5}).
Since $G$ has non-negative matrix elements,  $\|G\|=1$, and $\|\phi\|=1$ we arrive at
\begin{equation}
\label{yes8}
\sum_x \pi(x) W(x)  \le r^2(n) \sum_{s=0}^L \sum_{y}   \langle y| G^{L-s} |\phi\rangle^2
=r^2(n) \sum_{s=0}^L \langle \phi | G^{2(L-s)}|\phi\rangle \le r^2(n)(1+L).
\end{equation}
Combining Eqs.~(\ref{Slikely},\ref{yes6},\ref{yes8})  results in 
\begin{equation}
\label{yes9}
\frac1{\pi(\calS)} \sum_{x\in \calS} \pi(x) \left[ W(x) + \sum_{t=1}^L V_t(x) \right] \le 2(L+1)(1+r^2(n))\equiv q(n).
\end{equation}
Therefore there must exist $x_M\in \calS$ such that 
\begin{equation}
\label{yes10}
W(x_M)\le q(n) \quad \mbox{and} \quad V_t(x_M)\le q(n) \quad \mbox{for all $t=1,\ldots,L$}.
\end{equation}
Furthermore, for any $x\in \calS$ one has
\begin{equation}
\label{yes11}
V_L(x)=\frac{\langle x|G^L|\phi\rangle}{\phi(x)}\ge \frac{\langle x|G^L|\psi\rangle}{r(n)\phi(x)}
=\frac{\psi(x)}{r(n) \phi(x)} \ge \frac{\langle \psi|\phi\rangle}{2r(n)} \ge \frac1{2r^2(n)}.
\end{equation}
This shows that there exists $x_M\in \calS$ such that 
\begin{equation}
\label{yes12}
\frac{\expect{\Gamma_L^2}}{\expect{\Gamma_L}^2}
=\frac{W(x_M)}{V_L(x_M)^2} \le 4r^4(n)q(n)\equiv p(n)
\quad \mbox{and} \quad \expect{\Gamma_t} \le q(n) \quad \mbox{for all $t=1,\ldots,L$}.
\end{equation}
This proves the desired bounds in Eq.~(\ref{yes3})
and shows that Merlin can make Arthur to accept with probability
at least $P_{acc}\ge 1/2p(n)\ge poly(1/n)$ provided that 
$\Gamma_{max}=\Omega(L^3 r^8(n))$.

Arthur can achieve the acceptance probability at least $2/3$ as required 
for the completeness condition  by implementing 
about $2p(n)$ independent rounds of the above algorithm
with the same witness. Arthur accepts the witness iff at least one of the rounds 
outputs `accept'. 
 For any no-instance
 each round accepts with probability at most $2^{-n}$ 
 and thus the full algorithm accepts with probability less than $1/3$ for large enough $n$.
This completes the proof that Guided Stoq-LH is contained in promise-$\MA$.

Finally, the statement that Guided Stoq-LH is complete for promise-$\MA$ if $k\ge 6$
follows trivially from Ref.~\cite{BDOT06}. Indeed, let $\calL=\calL_{yes}\cup \calL_{no}$
be any language in promise-$\MA$, where $\calL_{yes}$ and $\calL_{no}$ are the sets of
yes- and no-instances. Without loss of generality we can assume that Arthur's verification
protocol has perfect completeness~\cite{Furer1989}, that is, 
for any yes-instance $\calI\in \calL_{yes}$ there exists a witness $\calW$
such that $P_{acc}(\calI,\calW)=1$. Using Lemma~3 of  Ref.~\cite{BDOT06}
one can efficiently transform any instance $\calI\in \calL$ into a  $6$-local stoquastic Hamiltonian $H$ 
such that its ground state energy satisfies $\lambda=0$ if $\calI\in \calL_{yes}$
and $\lambda\ge poly(1/n)$ if $\calI\in \calL_{no}$. Furthermore, for any yes-instance $\calI$
one can choose a ground state of $H$ as 
a coherent superposition of all computational branches of Arthur's verification algorithm
that lead to the acceptance (for some fixed witness $\calW$ such that $P_{acc}(\calI,\calW)=1$).
Hence computing the amplitudes of $\psi$ is equivalent 
to checking  whether a sequence of binary strings represent a valid computational path 
of the algorithm. This can be checked in time $poly(n)$. Therefore for a yes-instance 
the ground state $\psi$ itself can be chosen as a guiding state. 
This shows that any problem in promise-$\MA$ can be reduced to Guided Stoq-LH with $k=6$.

\subsection{Open problems}
\label{subs:problems1}

A natural question is whether the above algorithm can be used to estimate the ground state energy without Merlin's 
assistance assuming that one has a good guess of the guiding state. 
One possible strategy would be to sweep $\lambda_M$ over a region that is likely to contain
the ground state energy and estimate Arthur's  acceptance probability $P_{acc}$
for each value of $\lambda_M$  by Monte Carlo simulation.  One should expect that $P_{acc}$ is negligible
unless $\lambda_M\approx \lambda$. Indeed, recall that $\|G\|=1-\beta(\lambda-\lambda_M)$. 
If $\lambda_M>\lambda$ then $\|G\|>1$ which leads to an exponential growth 
of the population, see Eq.~(\ref{first}). Accordingly, 
 the test $\Gamma_t\le \Gamma_{max}$ is likely to fail.
On the other hand, if $\lambda_M<\lambda$ one has $\|G\|<1$ and 
the final population is likely to be empty.  In both cases the outcome of the protocol is `reject'.
Unfortunately, proving that $P_{acc}$ is non-negligible for $\lambda\approx \lambda_M$ 
requires a careful choice of the initial string $x_M$.
A preliminary analysis shows that $x_M$ can be chosen efficiently
if the guiding state obeys a stronger condition 
$poly(1/n) \langle x|\psi\rangle \le \langle x|\phi\rangle \le poly(n)\langle x|\psi\rangle$
for all $x$.

One may also ask  whether Theorem~\ref{thm:guided} holds for some weaker
notion of a guiding state. For example, the pointwise correlation condition
in Eq.~(\ref{guiding}) appears to be unreasonably  strong. It would be very desirable 
to replace Eq.~(\ref{guiding})   by a bound on some global  correlation measure that has a clear physical meaning. 
Ideally, a guiding state $\phi$ just needs to have
an overlap $\ge poly(1/n)$ with some exact ground state $\psi$ and have efficiently computable
amplitudes. A  preliminary analysis shows that for Hamiltonians with a polynomial spectral gap
a guiding state only needs to satisfy a condition $\sum_x \langle x|\psi\rangle^3/\langle x|\phi\rangle \le poly(n)$.
Alternatively,  one can study guiding states
$\phi$ that admit an efficient classical algorithm for sampling a basis vector $x$
from the distribution $\langle x|\phi\rangle^2$.

Finally, it is important to identify non-trivial classes of stoquastic Hamiltonians that 
actually admit a guiding state. For example, results of Ref.~\cite{BT07}
imply that for any frustration-free stoquastic Hamiltonian
one can always choose a non-negative ground state with efficiently computable amplitudes. 
In this case the ground state itself can serve as a guiding state.
We anticipate that non-trivial examples of guiding states could be found 
among tensor network states such as Matrix Product States or PEPS~\cite{Verstraete2008}.
We conjecture that the ferromagnetic TIM Hamiltonians studied in the next section admit a guiding state.

\subsection{Proof of Lemma~\ref{lemma:moments}}
For any integers $0\le t\le s$ 
define a random variable
\begin{equation}
\label{moments1}
\Gamma_{t,s}=\left\{ \begin{array}{rcl}
\Gamma_t &\mbox{if} & s=0, \\
\sum_{y,z} \gamma_t(y) \langle y|P^s|z\rangle &\mbox{if}& s\ge 1. \\
\end{array}\right.
\end{equation}
Applying repeatedly Eq.~(\ref{facts}) one can easily show that
\begin{equation}
\label{moments2}
\expect{\Gamma_{t,s}}=\expect{\Gamma_{t-1,s+1}}.
\end{equation}
Note that $\Gamma_{0,t}$ is a deterministic variable,
\begin{equation}
\label{moments3}
\Gamma_{0,t}=\sum_z \langle x_M|P^t|z\rangle.
\end{equation}
Hence
\begin{equation}
\label{moments4}
\expect{\Gamma_t}=\expect{\Gamma_{t,0}}=\expect{\Gamma_{0,t}}=\sum_z \langle x_M|P^t|z\rangle=
\frac1{\phi(x_M)} \langle x_M|G^t|\phi\rangle.
\end{equation}
This proves Eq.~(\ref{first}).
  Let us now compute $\expect{\Gamma_{t,s}^2}$ with respect to the conditional
distribution 
\[
\prob{\gamma_t|\gamma_1,\ldots,\gamma_{t-1}}=\prob{\gamma_t|\gamma_{t-1}}.
\]
For a fixed $\gamma_{t-1}$ the variables $\gamma_t(y)$ are independent Poisson-distributed variables
so that 
\begin{equation}
\label{tricky1}
\expect{\gamma_t(y)}=\sum_x  \gamma_{t-1}(x) \langle x|P|y\rangle
\end{equation}
and
\begin{equation}
\label{tricky2}
\expect{\gamma_t(y)\gamma_t(y')}=\expect{\gamma_t(y)}\cdot \expect{\gamma_t(y')} + \delta_{y,y'} \expect{\gamma_t(y)}.
\end{equation}
Using Eq.~(\ref{tricky1},\ref{tricky2}) one easily gets
\begin{equation}
\label{tricky3}
\expect{\Gamma_{t,s}^2}=\Gamma_{t-1,s+1}^2 + 
\sum_{x,y,z,z'} \gamma_{t-1}(x)  \langle x|P|y\rangle \cdot \langle y | P^s | z\rangle \cdot \langle y | P^s| z'\rangle.
\end{equation}
Taking the expectation value over the distribution of $\gamma_1,\ldots,\gamma_{t-1}$ leads to
\begin{equation}
\label{tricky4}
\expect{\Gamma_{t,s}^2}=\expect{\Gamma_{t-1,s+1}^2} 
+ \sum_{y,z,z'} \langle x_M|P^t |y\rangle \cdot  \langle y|P^s|z\rangle \cdot   \langle y|P^s|z'\rangle.
\end{equation}
Iterating Eq.~(\ref{tricky4}) starting from $t=L$, $s=0$ and using Eq.~(\ref{moments3})
one gets
\begin{equation}
\label{tricky6}
\expect{\Gamma_L^2}=\sum_{\substack{s+t=L\\ s,t\ge 0 \\}}\; \sum_y \langle x_M|P^t|y\rangle 
\left[ \sum_z \langle y|P^s|z\rangle \right]^2.
\end{equation}
Substituting the definition of $P$, see Eq.~(\ref{P(x,y)}), leads to Eq.~(\ref{second}).
This proves Lemma~\ref{lemma:moments}.

\section{Approximating the partition function of TIM}

In this section we prove Theorem~\ref{thm:TIM}.
We shall use a notation $\rho(A)$ for the difference
between the largest and the smallest eigenvalue of a hermitian matrix $A$.
Note that $\rho(A)\le 2\|A\|$ for any matrix $A$.
The following lemma will be  needed to control the error in the Suzuki-Trotter 
approximation. 
\begin{lemma}
\label{lemma:ABD}
Consider any pair of hermitian operators $A,B$ and let $\rho=\rho(A)+\rho(B)$.
For any $0\le t\le (2\rho)^{-1}$ there exists a hermitian
operator $D$ such that  $\|D\|\le 12 \rho^3$ and
\begin{equation}
\label{lemma1}
e^{At/2} e^{Bt} e^{At/2} = e^{(A+B)t+Dt^3}.
\end{equation}
\end{lemma}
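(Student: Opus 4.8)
The plan is to use the Baker--Campbell--Hausdorff (BCH) expansion in the symmetric (Trotter) form and to bound the remainder term by a convergent series. First I would recall the symmetric Zassenhaus/BCH statement: for operators $X,Y$ of sufficiently small norm, there is an operator-valued function $C(X,Y)$ with $e^{X/2}e^{Y}e^{X/2}=e^{C(X,Y)}$, and the standard symmetric Trotter expansion gives $C(X,Y)=X+Y+R$ where the remainder $R$ collects all nested commutators of total degree $\ge 3$; crucially, by the symmetry of the product there is no degree-$2$ term, so the leading correction is cubic. Here I would substitute $X=At$, $Y=Bt$, so $C=(A+B)t+Dt^3$ with $D=R/t^3$ collecting all the higher-order pieces (degree $\ge 3$ in $t$), and the goal reduces to the bound $\|D\|\le 12\rho^3$ on the interval $0\le t\le(2\rho)^{-1}$.

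Next I would make the norm bound quantitative. The subtlety is that $\rho(A),\rho(B)$ (spread of the spectrum) need not control $\|A\|,\|B\|$, but they do control commutators: for any hermitian $A,B$ one has $\|[A,B]\|\le \rho(A)\rho(B)$, and more generally a nested commutator of degree $d$ in $A,B$ with $a$ factors of $A$ and $b=d-a$ factors of $B$ is bounded by $\rho(A)^a\rho(B)^b\le \rho^d$ where $\rho=\rho(A)+\rho(B)$. (The key input is that shifting $A\to A-cI$ does not change any commutator, so one may assume $\|A\|\le\rho(A)/2$ and likewise for $B$; then the usual submultiplicative bound on nested commutators applies.) Therefore every term of degree $d$ in the BCH series for $C(At,Bt)$ is bounded in norm by $\rho^d t^d$ times the absolute value of its (universal, $t$-independent) rational BCH coefficient. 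Writing $\|D\|t^3=\|R\|\le\sum_{d\ge 3} c_d\,\rho^d t^d$ where $c_d$ is the sum of absolute values of the degree-$d$ symmetric-BCH coefficients, and using $\rho t\le 1/2$, I get $\|D\|\le\sum_{d\ge3}c_d\,\rho^{d}(1/2)^{d-3}$, a fixed numerical constant times $\rho^3$.

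The remaining work is to check that this constant is at most $12$. For this I would bound the $c_d$ using a known majorant for the BCH series: the product $e^{X/2}e^Y e^{X/2}$, viewed via $\log$, converges and is analytically bounded whenever $\|X\|+\|Y\|$ is below a universal radius (e.g. using the integral formula for $\log(e^{X/2}e^Ye^{X/2})$ and the estimate that the logarithm of a product of exponentials with total exponent-norm $r<\log 2$ has a convergent series with explicit geometric tail). Since in our application the effective exponent-norm is $\le \rho t + \rho t\le 1$ — comfortably inside the radius of convergence — the degree-$\ge 3$ tail is a geometric-type series that sums to well under $12\rho^3$; in fact even the crude bound (sum of a geometric series with ratio $\le 1/2$ starting at the cubic term, with generous constants for the combinatorial coefficients) suffices. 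The main obstacle, and the only place requiring care, is precisely this last step: converting the abstract "remainder is $O(\rho^3)$" into the explicit constant $12$, which forces one to fix a concrete form of the symmetric BCH remainder (e.g.\ an integral remainder formula rather than the term-by-term series) so that the geometric tail can be summed cleanly; everything else is bookkeeping with the commutator estimate $\|[A,B]\|\le\rho(A)\rho(B)$.
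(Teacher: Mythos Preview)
Your approach is sound and leads to a correct proof, but it differs from the paper's route. You work term-by-term with the symmetric BCH series, using the shift-invariance of commutators to replace $\|A\|,\|B\|$ by $\rho(A)/2,\rho(B)/2$ and then bounding each nested commutator by a power of $\rho$; the remaining task is to sum the (universal) BCH coefficients and check that the total lands below $12$. The paper instead avoids the BCH combinatorics entirely: after the same spectral shift (so that $A\le 0$, $\|A\|=\rho(A)$, and likewise for $B$), it treats $Z(t)=e^{At/2}e^{Bt}e^{At/2}$ as an analytic operator-valued function, bounds the Taylor coefficients $Z_p$ for $p\ge 3$ via Cauchy's integral formula on the circle $|t|=\rho^{-1}$ (where $\|Z(t)\|\le e$), and then defines $D$ directly through $D t^3=\log Z(t)-(A+B)t$, controlling the logarithm by the elementary scalar inequality $|\log(1+u)-u+u^2/2|\le C'|u|^3$ for $|u|\le 1/2$.

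The trade-off is this: your BCH argument is more structural and explains transparently why only $\rho(A),\rho(B)$ (and not the norms) matter, but it leaves the constant as a somewhat delicate bookkeeping exercise with the BCH coefficients --- you flag this yourself as the main obstacle, and indeed making the ``geometric tail sums to well under $12$'' step rigorous requires either an integral remainder formula or a careful majorant for the symmetric BCH series. The paper's Cauchy-estimate approach is less conceptual but makes the constant completely explicit with only elementary calculus, at the price of a short algebraic computation at the end. Both rely on the same key observation (shift invariance), and either would be acceptable; the paper's version is more self-contained for the purpose of pinning down the numerical $12$.
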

Since the proof involves a straightforward  calculation, we postpone it 
until the end of this section.  Consider now a TIM Hamiltonian
\begin{equation}
\label{TIM2}
H=-A-B, \quad A=\sum_{1\le u<v\le n} J_{u,v} Z_u Z_v, \quad B=\sum_{1\le u\le n} h_u X_u,
\end{equation}
where $J_{u,v}\ge 0$ for all $u,v$ (the ferromagnetic case). Let $\calZ=\trace{\left(e^{A+B}\right)}$
be the partition function. 
For any integer $r\ge 1$ define a Suzuki-Trotter approximation to $\calZ$ as 
\begin{equation}
\label{ST}
\calZ'=\trace{\, \left( e^{At} e^{Bt} \right)^r }, \quad t\equiv r^{-1}.
\end{equation}
Note that $\rho\equiv \rho(A)+\rho(B)\le 2(\|A\|+\|B\|)\le poly(n,J)$. 
Suppose  $r\ge 2\rho$ so that $0\le t\le (2\rho)^{-1}$. Then Lemma~\ref{lemma:ABD} implies
\begin{equation}
\label{Z'}
\calZ'= \trace{\, \left(e^{At/2} e^{Bt} e^{At/2} \right)^r }=\trace{\, e^{r\left[ (A+B)t + Dt^3\right]} } =
\trace{\, e^{A+B+C} },
\end{equation}
where $C=Dr^{-2}$ is a hermitian operator such that $\|C\|\le 12\rho^3 r^{-2}$.

 Let $\lambda_i$ and $\lambda_i'$ be the $i$-th largest eigenvalue
of the Hamiltonian $A+B$ and $A+B+C$  respectively, where $i=1,\ldots,2^n$. By Weyl's inequality and Lemma~\ref{lemma:ABD},
\begin{equation}
\label{Weyl}
|\lambda_i-\lambda_i'|\le \|C\|\le \frac{12\rho^3}{r^2}.
\end{equation}
Choosing
\begin{equation}
\label{r}
r=\max{  \left[ 2\rho, \sqrt{12\rho^3 \delta^{-1} } \right] }
\end{equation}
guarantees that conditions of Lemma~\ref{lemma:ABD} are satisfied and 
$\|C\|\le \delta$. Note that $r\le poly(n,J,\delta^{-1})$ since
$\rho\le poly(n,J)$. Then $|\lambda_i'-\lambda_i|\le \delta$ and
\begin{equation}
\calZ'=\sum_{i=1}^{2^n} e^{\lambda_i'} \le  \sum_{i=1}^{2^n} e^{\lambda_i+\delta} \approx (1+\delta) \calZ.
\end{equation}
Here we assumed for simplicity that $\delta\ll 1$.
The same arguments show that
$\calZ'\ge (1-\delta) \calZ$. Hence $\calZ'$ approximates $\calZ$ up to a multiplicative error $\delta$.

The remaining step is  the standard quantum-to-classical mapping that 
relates $\calZ'$ to the partition function of a classical Ising model~\cite{Sachdev2007}.
Let $\sigma=(\sigma_1,\ldots, \sigma_n)\in \{\pm 1\}^n$ be a configuration of  $n$ classical Ising spins
and $|\sigma\ra$ be the corresponding basis state of $n$ qubits. Then
\begin{equation}
\label{eAt}
e^{At}=\sum_{\sigma \in \{\pm 1\}^n} \; e^{E_A(\sigma)} |\sigma\rangle\langle\sigma|
\quad \mbox{where} \quad
E_A(\sigma)=\sum_{1\le u<v\le n} tJ_{u,v} \, \sigma_u \sigma_v
\end{equation}
and
\begin{equation}
\label{eBt}
e^{Bt} = \Gamma \sum_{\sigma,\sigma'\in \{\pm 1\}^n}\;  e^{E_B(\sigma,\sigma')} |\sigma\rangle\langle\sigma'|
\quad \mbox{where} \quad
E_B(\sigma,\sigma')=\sum_{1\le u\le n} \tilde{h}_u \sigma_u \sigma_u'.
\end{equation}
Here
\begin{equation}
\label{tildeh}
\tilde{h}_u = -\frac12 \log{\left[ \tanh{(t h_u)} \right] } \quad \mbox{and} \quad
\Gamma=2^{-n/2} \prod_{u=1}^n \sqrt{\sinh{(2t h_u)}}.
\end{equation}
Note that $\tilde{h}_u\ge 0$ since we assumed $h_u\ge 0$.
From Eqs.~(\ref{eAt},\ref{eBt}) one gets
\begin{equation}
\label{QtoC}
\calZ'=\trace{\, \left( e^{At} e^{Bt} \right)^r }= \Gamma^r 
\sum_{\sigma^1,\ldots,\sigma^r\in \{\pm 1\}^n} \; \exp{\left[ \sum_{i=1}^r E_A(\sigma^i) + E_B(\sigma^i,\sigma^{i+1})\right]}
\equiv \sum_{\theta\in \{\pm 1\}^{nr}} e^{E(\theta)}.
\end{equation}
The energy function $E(\theta)$ describes $r$ copies of the $n$-spin ferromagnetic Ising model
such that the $i$-th  copy has  the energy function $E_A(\sigma^i)$ and  each consecutive pair of copies is coupled 
by the energy function $E_B(\sigma^i,\sigma^{i+1})$.
The multiplicative factor $\Gamma^r$ can be absorbed into a constant energy shift in $E(\theta)$. 
Since this factor is easy to compute, for simplicity we shall ignore it. 
Hence $\calZ'$ is the partition function of a classical ferromagnetic Ising model.
We can now invoke Theorem~5 of Ref.~\cite{JS93}. It asserts that 
$\calZ'$ admits FPRAS with a running time
 $O(\delta^{-2}M^3 N^{11} \log{N})$, where $N=nr$ is the total number of spins
and $M$ is the number of non-zero spin-spin couplings. Note that in our case
$M\le nr+{n \choose 2}r \le n^2r$.
Using Eq.~(\ref{r}) with a conservative estimate $\rho\le n^2 J$ one gets
$r\le  n^3 J^{3/2} \delta^{-1/2}$. Thus $N\le n^4 J^{3/2} \delta^{-1/2}$,  $M\le n^5 J^{3/2} \delta^{-1/2}$
and the FPRAS for $\calZ'$ has running time $O(n^{59} J^{21} \delta^{-9})$, ignoring logarithmic factors. 
Since $\calZ'$ approximates $\calZ$ with a multiplicative error $\delta$,
 this proves Theorem~\ref{thm:TIM}.

{\em Remark:} We note that $\tilde{h}_u$ becomes infinite if $h_u=0$.
One can always assume that $h_u\ge \delta/n$ since changing the Hamiltonian
by a perturbation of norm at most $\delta$ leads to a multiplicative error of order $\delta$
in the partition function.

\subsection{Proof of Lemma~\ref{lemma:ABD}}
Since Eq.~(\ref{lemma1}) is invariant under a shift $A\to A+cI$, we can
assume that the largest eigenvalue of $A$ is zero. Then $A\le 0$
and $\|A\|=\rho(A)$. 
Likewise, we can assume that $B\le 0$ and $\|B\|=\rho(B)$. 
Consider the Taylor series
\[
Z(t)\equiv e^{At/2} e^{Bt} e^{At/2}=\sum_{p=0}^\infty Z_p\, t^p
\]
that converges absolutely for any $t\in \CC$.
One can easily check that
\[
Z_0=I, \quad Z_1=A+B, \quad Z_2=Z_1^2/2.
\]
We shall upper bound the norm of higher order coefficients using the Cauchy's formula,
\[
Z_p=\frac1{2\pi i} \oint_{|t|=R} \frac{Z(t)dt}{t^{p+1}}.
\] 
Since $Z(t)$ is analytic in the full complex plane, the radius $R$ can be chosen 
arbitrarily. Let us choose $R=\rho^{-1}$ and 
let $C$ be the maximum of $\|Z(t)\|$ over the circle  $|t|=R$.
Then $\|Z_p\| \le CR^{-p}=C\rho^p$ and 
\[
Z(t)=\sum_{p=0}^2 Z_p t^p + \Delta \quad \mbox{where} \quad \|\Delta\|\le C\sum_{p\ge 3}
(\rho t)^p  \le 2C\rho^3 t^3.
\]
Here we used the assumption that $0\le t\le (2\rho)^{-1}$.  
Note that 
\[
\|Z(t)\| \le \|e^{At/2}\|^2 \cdot \|e^{Bt}\| \le e^{(\|A\| + \|B\|)|t|}=e^{\rho |t|} = e.
\]
for any complex $t$ with $|t|=R=\rho^{-1}$. Hence $C\le e$.

Choose any $0\le t\le (2\rho)^{-1}$ and define
\[
U(t)\equiv Z(t)-I=Z_1 t + Z_2 t^2 + \Delta.
\]
Taking into account that $e^{At/2}\le I$ and $e^{Bt}\le I$ one easily gets 
\begin{equation}
\label{U(t)bound}
\|U(t)\| \le 2\|e^{At/2} - I \| + \| e^{Bt}-I\| \le  t(\|A\| + \|B\|) =\rho t \le \frac12.
\end{equation}
Here we used the inequality $e^x\ge 1+x$ which holds for $x\le 0$.
 One can easily  check that 
\begin{equation}
\label{C'}
\| \log{(I+U)} - U + U^2/2 \| \le C' \|U\|^3,
\end{equation}
for any hermitian operator $U$ such that $\|U\|\le 1/2$, 
where 
\[
C'\equiv \max_{x\, : \, |x|\le 1/2} \; \; \frac1{|x|^3} \left| \log{(1+x)} - x + x^2/2 \right| \approx 0.55.
\]
Choose the desired operator $D$ as
\begin{equation}
\label{C}
D=t^{-3}\left[ -(A+B)t + \log{\left( e^{At/2} e^{Bt} e^{At/2}\right)} \right]
=t^{-3}\left[ -Z_1 t + \log{(I+U(t))}\right].
\end{equation}
By definition, $D$ is hermitian. Combining Eq.~(\ref{U(t)bound},\ref{C'}) 
and using the union bound one  gets
\[
\| D\| \le t^{-3} \| -Z_1 t + U(t) - U(t)^2/2 \| + C' \rho^3.
\]
A simple algebra shows that
\[
\Gamma\equiv -Z_1 t + U(t) - U(t)^2/2 =  \Delta -\frac{Z_1^4 t^4}8 - \frac{\Delta^2}2 - \frac{Z_1^3 t^3}2 - Z_1 \Delta t -
 \frac{Z_1^2 \Delta t^2}2.
\]
Taking into account that $\|Z_1\|\le \rho $ and $\rho t \le 1/2$ one arrives at
\[
\|\Gamma\| \le (\rho t)^3\left(  2C + \frac{1}{16} + \frac{C^2}4 + \frac{1}2 + C +\frac{C} 4 \right)\le 
(\rho t)^3 \left( \frac9{16} + \frac{13C}4 + \frac{C^2}4\right).
\]
Therefore
\[
\|D\| \le \rho^3 \left( \frac9{16} + \frac{13C}4 + \frac{C^2}4 + C'\right).
\]
Substituting $C\le e$ and $C'\le 0.56$ leads to $\|D\|\le 12 \rho^3$. 

\vspace{10mm}

{\bf Acknowledgments --}
The author would like to thank Toby Cubitt and Graeme Smith for  helpful discussions.
This work was supported in part  by the DARPA QuEST program under contract number HR0011-09-C-0047.



\end{document}